\numberwithin{equation}{section}
\newcommand{\alg}[1]{\mathfrak{#1}}
\newcommand{\mc}[1]{\mathcal{#1}}
\newcommand{\Ad}{\operatorname{Ad}}
\newcommand{\bonds}{{\bf B}}
\newcommand{\wlim}{\operatorname{w-lim}}
\theoremstyle{plain}
\newtheorem{theorem}{Theorem}
\numberwithin{theorem}{section}
\newtheorem{lemma}[theorem]{Lemma}
\newtheorem{proposition}[theorem]{Proposition}
\theoremstyle{definition}
\newtheorem{definition}[theorem]{Definition}
\theoremstyle{remark}
\newtheorem{remark}[theorem]{Remark}
\title{Kosaki-Longo index and classification of charges in 2D quantum spin models}
\author{Pieter Naaijkens\footnote{E-mail: \texttt{pieter.naaijkens@itp.uni-hannover.de}}\\
Institute for Theoretical Physics\\Leibniz Universit\"at Hannover, Germany}
\begin{document}
\maketitle
\begin{abstract}
We consider charge superselection sectors of two-dimensional quantum spin models corresponding to cone localisable charges, and prove that the number of equivalence classes of such charges is bounded by the Kosaki-Longo index of an inclusion of certain observable algebras. To demonstrate the power of this result we apply the theory to the toric code on a 2D infinite lattice. For this model we can compute the index of this inclusion, and conclude that there are four distinct irreducible charges in this model, in accordance with the analysis of the toric code model on compact surfaces. We also give a sufficient criterion for the non-degeneracy of the charge sectors, in the sense that Verlinde's matrix $S$ is invertible.
\end{abstract}

\section{Introduction}
Over the last decade the interest in quantum systems which exhibit (quasi)par\-ticle excitations with anyonic or braided statistics has greatly increased. One of the driving forces behind this is the realisation that such systems may be useful for quantum information. In particular, topological features of such excitations are thought to be helpful in protecting the storage of quantum information from interactions with the environment, or to implement inherently fault-tolerant quantum circuits~\cite{MR1612425,MR1951039,MR2443722}. In recent work we have started to investigate a framework to obtain all relevant properties of such excitations from first principles~\cite{toricendo,phdnaaijkens}, where the focus is on the thermodynamic limit of quantum spin models. Essentially, the excitations can be described by certain representations of the observable algebra.

In concrete models it is often possible to explicitly construct such representations for different types of excitations (or charges). A natural question is then if one has found \emph{all} such excitations. In the present work we show that this question can be answered by looking at a certain numerical value, the \emph{Kosaki-Longo index}, associated to inclusions of certain von Neumann algebras generated by observables.

This index is related to the index that Jones introduced for inclusions $\alg{N} \subset \alg{M}$ of Type II$_1$ von Neumann algebras~\cite{MR696688}. Kosaki later extended Jones' notion of an index to inclusions of arbitrary factors~\cite{MR829381}. A different approach to define the index was taken by Longo, who also studied connections with the ``statistical dimension'' of superselection sectors in quantum field theory~\cite{MR1027496,MR1059320}. One can show that the Kosaki and Longo indices coincide, and we will henceforth call it the Kosaki-Longo index. A review of the subject can be found in~\cite{MR1662525}.

To recall the definition of this index, consider a subfactor $\alg{N} \subset \alg{M}$. We will always assume that $\alg{N}$ and $\alg{M}$ are infinite factors, and for our purposes we can restrict to irreducible inclusions, that is, inclusions for which $\alg{M} \cap \alg{N}' = \mathbb{C} I$. The subfactor $\alg{N} \subset \alg{M}$ has \emph{finite index} if there is some $\lambda > 0$ and a normal conditional expectation $\mc{E}$ of $\alg{M}$ onto $\alg{N}$ such that
\begin{equation}
	\label{eq:pimpop}
	\mc{E}(X) \geq \lambda X \quad \textrm{for all}\, X \in \alg{M}_{+}.
\end{equation}
Note that since $\lambda > 0$, such an $\mc{E}$ is automatically faithful. The index is then defined as $[ \alg{M} : \alg{N} ]_{\mc{E}} = \lambda^{-1}$, where $\lambda$ is the best such constant satisfying the inequality. In fact, for irreducible inclusions there is at most one normal conditional expectation~\cite[Sect. 3.3]{MR1662525}. For completeness we mention that in the non-irreducible case there is a unique conditional expectation $\mc{E}$ that minimizes $[\alg{M}:\alg{N}]_{\mc{E}}$ by~\cite[Thm. 5.5]{MR1027496} if the index is finite. We will denote the corresponding index by $[ \alg{M} : \alg{N}]$. If such a conditional expectation does not exist, we set $[ \alg{M} : \alg{N} ] = \infty$. Also note that normality of the conditional expectation already follows when it satisfies a Pimsner-Popa bound (equation~\eqref{eq:pimpop}) with $\lambda>0$ by~\cite[Cor. A.4]{MR2885606}.

Here we will be interested in the Kosaki-Longo index for certain inclusions of observable algebras. More concretely, we consider quantum spin models on infinite 2D lattices. Consider an irreducible ground state (with respect to some dynamics) representation $\pi_0$ of a $C^*$-algebra $\alg{A}$ of observables, satisfying some technical properties that we will explain later.\footnote{In particular Haag duality and the distal split property for cones.} Let $\Gamma$ be the union of two disjoint (and sufficiently far removed) cone-like regions. Then one can associate a von Neumann algebra $\mc{R}(\Gamma) := \pi_0(\alg{A}(\Gamma))''$ to this region, generated by all observables localised in $\Gamma$. Similarly one defines $\widehat{\mc{R}}(\Gamma) := \pi_0(\alg{A}(\Gamma^c))' \supset \mc{R}(\Gamma)$, where $\Gamma^c$ is the complement of $\Gamma$. To such inclusions of von Neumann algebras one can associate an index as above. The main goal of the present work is to show that this quantity tells us something about the number of elementary excitations (or charges) of our quantum spin model. Our methods are inspired by similar results that have been obtained in the context of (rational) conformal field theory on the circle~\cite{MR1838752}.

The essential connection between the Kosaki-Longo index and the number of distinct charges of the system is facilitated by identifying the different charges as (equivalence classes of) endomorphisms of the observable algebra. In essence, these endomorphisms describe how the observables in the ground state are modified in the presence of a \emph{single} excitation in the system. Such endomorphisms are particularly suited to topological models where excitations are created in pairs connected by a string. A prominent example is Kitaev's quantum double model~\cite{MR1951039}. The endomorphisms are then obtained essentially by sending one excitation of the pair to infinity, c.f.~\cite{toricendo}. Different types of excitations lead to inequivalent endomorphisms. The corresponding equivalence classes of representations obtained from these endomorphisms are also called \emph{superselection sectors}~\cite{MR1405610}. 

Another interesting and important question is what the \emph{statistics} of the excitations are. Or in other words, how do they behave under interchange? Two dimensional models are especially intersting in that respect, since there one can in principle have anyonic or braided statistics that go beyond the traditional Bose/Fermi (para)statistics~\cite{MR1016869,MR1081990}. The properties of the mutual statistics of the charges can be encoded in the ``modular matrix'' $S$~\cite{MR954762}. If this matrix $S$ is invertible, the behaviour is as far away as possible from the Bose/Fermi alternative. This can be interpreted as a certain non-degeneracy condition. Here we will give a sufficient criterion for $S$ to be invertible, stated only in terms of certain observable algebras. This generalises a result of M\"uger for quantum field theories on the line~\cite{MR1721563}. In contrast to the quantum field theory case, 2D lattice systems appear to be more amenable to a direct verification of this criterion. For example, the proofs in Sect.~\ref{sec:toric} of the present paper immediately imply that this condition is satisfied for the toric code.

The outline of the paper is as follows. First we will state the framework in which we will work in Section~\ref{sec:charges}. In Section~\ref{sec:index} the cone index associated to certain observable algebras is defined, and some properties are discussed. In Section~\ref{sec:toric} the theory will be applied to Kitaev's toric code model: in particular we show that we have a complete classification of the charges in that model. Finally, in the last Section we give a sufficient criterion for the non-degeneracy of the charges.

\section{Charges in 2D spin models}\label{sec:charges}
In this section we will outline the mathematical framework that we will use. For the sake of concreteness, and because of the application that we consider later, we assume that we have a $\mathbb{Z}^2$ lattice. Both directions in the lattice are seen as \emph{spatial} directions. The continuous time direction only plays a minor role in our analysis. The edges between neighbouring vertices of the lattice are called \emph{bonds}. The set of bonds will be denoted by $\bonds$. We assume that at each bond there is a quantum system with finitely many degrees of freedom, for example a spin-$d$ system. One can easily generalise this to more complicated situations, but this will suffice for us.

Note that we are dealing with systems with infinitely many sites. Such systems can conveniently be described in the $C^*$-algebraic approach to quantum systems~\cite{MR887100,MR1441540}. We recall the essential points here. At each bond $b$ the quantum system is described by a $d$-dimensional Hilbert space, with algebra of observable $\alg{A}(\{b\}) = M_d(\mathbb{C})$. The observable algebra of a finite set $\Lambda \in \mc{P}_f(\bonds)$ of sites, where $\mc{P}_f(\bonds)$ is the set of all finite subsets of $\bonds$, is obtained as usual: $\alg{A}(\Lambda) := \bigotimes_{j \in \Lambda} \alg{A}(\{j\})$. Hence it is isomorphic to the tensor product of $|\Lambda|$ copies of $M_d(\mathbb{C})$. 

The assignment $\Lambda \mapsto \alg{A}(\Lambda)$ of algebras to finite regions of the system has two important properties. If $\Lambda_1 \subset \Lambda_2$ are finite subsets of $\bonds$ then there is a natural inclusion $\alg{A}(\Lambda_1) \hookrightarrow \alg{A}(\Lambda_2)$ of the corresponding observable algebras, by tensoring with identity matrices. If $\Lambda_1 \cap \Lambda_2 = \emptyset$, one has the \emph{locality} property: $[\alg{A}(\Lambda_1), \alg{A}(\Lambda_2)] = \{0\}$ (viewed as subalgebras of $\alg{A}(\Lambda)$ for some sufficiently large set $\Lambda$). This makes the assignment $\Lambda \mapsto \alg{A}(\Lambda)$ into a \emph{local net}. The \emph{quasi-local algebra} $\alg{A}$ is obtained by taking the union of all the local algebras, and completing with respect to the natural norm to obtain a $C^*$-algebra.\footnote{More abstractly, one can take the inductive limit of the net in the category of $C^*$-algebras.} Similarly, if $\Lambda$ is any (possibly infinite) subset of $\bonds$, the corresponding quasi-local algebra is defined by
\[
\alg{A}(\Lambda) = \overline{\bigcup_{\Lambda_f \subset \Lambda} \alg{A}(\Lambda)}^{\| \cdot \|}.
\]
The union is over all finite subsets of $\Lambda$ and the bar denotes closure in the norm topology.

So far the set-up has been very general. To recover the usual Hilbert space description of quantum mechanics we consider representations of $\alg{A}$. A $C^*$-algebra $\alg{A}$ generally has many different inequivalent representations, and many of those are not physically relevant (for example because the energy is not bounded from below). A natural choice is to consider ground state representations with respect to the dynamics of some model we are interested in. Let us write $\pi_0$ for such a choice of ground state, where we take $\pi_0$ to be irreducible. Such a representation can be obtained by applying the GNS construction to a pure ground state. The main interest in the present paper are the elementary excitations or charges with respect to this ground state. In the spirit of the Doplicher-Haag-Roberts (DHR) program in algebraic quantum field theory~\cite{MR1405610}, these correspond to inequivalent irreducible representations, that look like the ground state representation outside certain regions. In particular, we will restrict our attention to representations satisfying the following \emph{selection criterion}:
\begin{equation}
	\pi_0 \upharpoonright \alg{A}(\Lambda^c) \cong \pi \upharpoonright \alg{A}(\Lambda^c),
	\label{eq:cselect}
\end{equation}
for all cones $\Lambda$, where $\Lambda^c$ denotes the complement of $\Lambda$ in the set of bonds. By a cone we mean the following: consider a point in the plane, and consider two semi-infinite lines emanating from it, with an angle smaller than $\pi$ between them. If we embed the bonds of the system in the plane in the canonical way, the cone $\Lambda$ consists of all those bonds which, when their endpoints are removed, intersect the area bounded by these two lines (see also Fig.~\ref{fig:paths} in Section~\ref{sec:toric}).

In words, the criterion~\eqref{eq:cselect} selects those representations of $\alg{A}$ that are unitarily equivalent to the ground state representation $\pi_0$ \emph{when restricted to observables outside \emph{arbitrary} cones $\Lambda$}. Note that this is similar to the selection criterion used in the context of massive particles in algebraic quantum field theory~\cite{MR660538}. The precise shape of the cones plays only a minor role (see also the comment after Definition~\ref{def:globalindex}), but this choice is easy to visualise.

In the present context, this selection criterion is motivated by the appearance of ``stringlike'' excitations in many models relevant for topological quantum computing~\cite{haahcph,MR1951039,PhysRevB.71.045110,toricendo}. In these models excitations are created in pairs, connected by a string. Such states depend only on the endpoints of the strings, hence one already observes some topological features. To consider a \emph{single} excitation, one can move one end of the pair to infinity. Because of the topological feature mentioned before, the result should not depend on the direction in which we move the particle away. This is precisely captured by the selection criterion~\eqref{eq:cselect}. This procedure has been worked out explicitly for the toric code model in~\cite{toricendo}.

The study of the charge superselection sectors hence amounts to studying the set of representations fulfilling the selection criterion. From a technical perspective it is however easier to work with \emph{endomorphisms} of the observable algebra $\alg{A}$, instead of representations, because endomorphisms carry more structure.\footnote{One can compose two endomorphisms, for example.} This can be achieved with the help of \emph{Haag duality}, whose definition we recall below. All physically relevant properties of the different charges, such as their braiding and fusion properties, can be recovered by the study of these endomorphisms.

Another property that we will need is the distal split property. For the convenience of the reader we recall these definitions:
\begin{definition}
	\label{def:conerel}
	Let $\Lambda \subset \bonds$. Then the von Neumann algebra $\mc{R}(\Lambda)$ is defined by $\mc{R}(\Lambda) = \pi_0(\alg{A}(\Lambda))''$. We say that $\pi_0$ satisfies \emph{Haag duality for cones} if for each cone $\Lambda \subset \bonds$ one has $\mc{R}(\Lambda) = \mc{R}(\Lambda^c)'$. Here $\Lambda^c$ is the complement of $\Lambda$ in the set of bonds $\bonds$. Finally, $\pi_0$ has the \emph{distal split property}, if there is a relation $\ll$ on the set of cones with the following properties: $\Lambda_1 \ll \Lambda_2$ implies $\Lambda_1 \subset \Lambda_2$ and if $\Lambda_1 \ll \Lambda_2$, then there is a Type I factor $\mc{N}$ such that $\mc{R}(\Lambda_1) \subset \mc{N} \subset \mc{R}(\Lambda_2)$.
\end{definition}
To avoid trivial cases we always assume that the relation $\ll$ is non-empty. In particular, we will assume that for each cone $\Lambda_1$ there is a disjoint cone $\Lambda_2$ such that there is a cone $\Lambda$ with $\Lambda_1 \ll \Lambda$ and $\Lambda_2 \subset \Lambda^c$. As will become clear, this is a natural requirement in view of the superselection criterion above. This demand can be weakened by restricting the set of admissible cones, if necessary. As will become clear later, the relation $\ll$ describes a kind of ``separation'' property on the observable algebras associated to cones. Our analysis requires both Haag duality and the distal split property, and from now on we will assume that $\pi_0$ has these properties. In the case of the toric code, there is a unique (and pure) ground state~\cite{MR2345476}. The corresponding GNS representation $\pi_0$ satisfies both Haag duality for cones and the distal split property~\cite{haagdtoric}.

It is instructive to recall how one can obtain endomorphisms from representations. Let $\pi$ be a representation satisfying the selection criterion~\eqref{eq:cselect}, and let $\Lambda$ be a cone. Then there is a unitary operator $V: \mc{H}_0 \to \mc{H}_{\pi}$ such that $V \pi_0(A) = \pi(A) V$ for all $A \in \alg{A}(\Lambda^c)$. Define $\rho(A) := V^* \pi(A) V$, which is a representation of $\alg{A}$ on $\mc{H}_0$. Now suppose that $\widehat{\Lambda}$ is a cone containing $\Lambda$ and let $A \in \alg{A}(\widehat{\Lambda})$ and $B \in \alg{A}(\widehat{\Lambda}^c)$. Then
\[
	\pi_0(B) \rho(A) = \pi_0(B) V^* \pi(A) V = V^* \pi(BA) V = V^* \pi(AB) V = \rho(A) \pi_0(B).
\]
It follows that $\rho(A) \in \mc{R}(\widehat{\Lambda}^c)'$, hence in $\mc{R}(\widehat{\Lambda})$ by Haag duality. Note that $\rho(A) = \pi_0(A)$ for all $A \in \alg{A}(\Lambda^c)$, i.e. it acts trivially on observables localised outside the cone $\Lambda$. We therefore say that $\rho$ is \emph{localised} in $\Lambda$. To obtain an endomorphism, one can show that in fact one can extend $\rho$ to an endomorphism of an auxiliary algebra $\alg{A}^{\Lambda_a}$, containing the algebras $\mc{R}(\widehat{\Lambda})$, c.f.~\cite{MR660538,toricendo}. The details are of only minor importance here: it is enough to restrict to endomorphisms of the form $\rho : \mc{R}(\Lambda) \to \mc{R}(\Lambda)$ satisfying transportability. Here \emph{transportability} means that for any other cone $\widehat{\Lambda}$, there is an endomorphism $\sigma$ localised in $\widehat{\Lambda}$ such that $\rho(A) = V \sigma(A) V^*$ for all $A \in \alg{A}$ and some unitary $V$. Since the selection criterion~\eqref{eq:cselect} holds for \emph{any} cone $\Lambda$, it follows that the corresponding endomorphisms as obtained above are transportable.

The index that we consider in the next section will give a bound on the number of such localised endomorphisms modulo inner automorphisms (of $\mc{R}(\Lambda)$). It is enough to consider irreducible morphisms, in the sense that $\rho(\mc{R}(\Lambda))' \cap \mc{R}(\Lambda) = \mathbb{C}$. If $\rho$ is obtained from an irreducible representation satisfying the selection criterion, it follows with the help of the localisation of $\rho$ in $\Lambda$ that $\rho$ is irreducible in this sense, when restricted to $\mc{R}(\Lambda)$. The converse is also true (c.f.~\cite{MR1181069}).

\section{Cone index}\label{sec:index}
In~\cite{toricendo} we identified a number of representations satisfying the selection criterion for the toric code. A natural question to ask is if these are in fact \emph{all} such representations. Following ideas of Kawahigashi, Longo and M\"uger~\cite{MR1838752}, who analysed the case of conformal field theories on the circle, we show that this question is related to the Kosaki-Longo index of certain inclusions of von Neumann algebras.

At this point we first describe the type of algebras that we consider. The algebras are associated to certain regions that are unions of cones.
\begin{definition}
We will write $\mathcal{C}^2$ for the set of subsets $\Lambda_1 \cup \Lambda_2 \subset \bonds$, where $\Lambda_1$ and $\Lambda_2$ are cones such that there is some cone $\Lambda$ with $\Lambda_1 \ll \Lambda$ and $\Lambda_2 \subset \Lambda^c$.
\end{definition}
Note that by the assumption made after Definition~\ref{def:conerel} the set $\mathcal{C}^2$ is non-empty. In general we will have $\bigcup_{\Gamma \in \mc{C}^2} \Gamma = \bonds$, but this assumption is not strictly necessary for our analysis. This condition ensures that we can apply the distal split property. It follows that for $\Gamma \in \mathcal{C}^2$, we have $\mc{R}(\Gamma) = \mc{R}(\Lambda_1) \vee \mc{R}(\Lambda_2) \simeq \mc{R}(\Lambda_1) \otimes \mc{R}(\Lambda_2)$ by the split property. The set $\mc{C}^2$ plays the same role as the set $\mc{I}_2$ of pairs of intervals (on the circle) with disjoint closure in~\cite{MR1838752}. Note however that if $\Gamma \in \mathcal{C}^2$, it is not true that the complement $\Gamma^c$ is in $\mathcal{C}^2$.

To each $\Gamma \in \mc{C}^2$ we will assign an irreducible subfactor. To this end, define $\widehat{\mc{R}}(\Gamma) := \mc{R}(\Gamma^c)'$. Note that by locality we have $\mc{R}(\Gamma) \subset \widehat{\mc{R}}(\Gamma)$. If $\Gamma$ would be a cone it follows by Haag duality that $\mc{R}(\Gamma) = \widehat{\mc{R}}(\Gamma)$, but for the union of two cones this inclusion is in general non-trivial. The next Lemma shows that this inclusion is indeed an irreducible subfactor.
\begin{lemma}\label{lem:factors}
	Let $\Gamma \in \mathcal{C}^2$. Then $\mc{R}(\Gamma) \subset \widehat{\mc{R}}(\Gamma)$ is an irreducible inclusion of factors.
\end{lemma}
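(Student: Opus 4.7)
The plan is to reduce everything to irreducibility first, and then derive the factor properties of both $\mc{R}(\Gamma)$ and $\widehat{\mc{R}}(\Gamma)$ as easy corollaries, using the inclusion $\mc{R}(\Gamma) \subset \widehat{\mc{R}}(\Gamma)$ and irreducibility of $\pi_0$.

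First I would establish the key identity $\mc{R}(\Gamma) \vee \mc{R}(\Gamma^c) = B(\mc{H}_0)$. Since $\Gamma \cup \Gamma^c = \bonds$, for any finite $\Lambda_f \subset \bonds$ one has the tensor factorisation $\alg{A}(\Lambda_f) = \alg{A}(\Lambda_f \cap \Gamma) \otimes \alg{A}(\Lambda_f \cap \Gamma^c)$, so the local observable $\alg{A}(\Lambda_f)$ sits inside the $C^*$-algebra generated by $\alg{A}(\Gamma)$ and $\alg{A}(\Gamma^c)$. Taking norm closures and then passing to $\pi_0$ and double commutants gives $\pi_0(\alg{A})'' \subseteq \mc{R}(\Gamma) \vee \mc{R}(\Gamma^c)$. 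Since $\pi_0$ is irreducible, $\pi_0(\alg{A})'' = B(\mc{H}_0)$, which forces equality.

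Taking commutants, this yields
\[
\widehat{\mc{R}}(\Gamma) \cap \mc{R}(\Gamma)' \;=\; \mc{R}(\Gamma^c)' \cap \mc{R}(\Gamma)' \;=\; \bigl(\mc{R}(\Gamma) \vee \mc{R}(\Gamma^c)\bigr)' \;=\; \mathbb{C} I,
\]
which is precisely irreducibility of the inclusion. From here, the factor properties come for free: the centre of $\mc{R}(\Gamma)$ satisfies
\[
Z(\mc{R}(\Gamma)) \;\subseteq\; \mc{R}(\Gamma) \cap \mc{R}(\Gamma)' \;\subseteq\; \widehat{\mc{R}}(\Gamma) \cap \mc{R}(\Gamma)' \;=\; \mathbb{C} I,
\]
using $\mc{R}(\Gamma) \subseteq \widehat{\mc{R}}(\Gamma)$ (which itself follows from locality, since $\alg{A}(\Gamma)$ and $\alg{A}(\Gamma^c)$ commute, giving $\mc{R}(\Gamma) \subseteq \mc{R}(\Gamma^c)' = \widehat{\mc{R}}(\Gamma)$). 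Similarly, $\widehat{\mc{R}}(\Gamma)' = \mc{R}(\Gamma^c)'' \supseteq \mc{R}(\Gamma^c)$, but more directly $\widehat{\mc{R}}(\Gamma)' \subseteq \mc{R}(\Gamma)'$ since $\mc{R}(\Gamma) \subseteq \widehat{\mc{R}}(\Gamma)$, so
\[
Z(\widehat{\mc{R}}(\Gamma)) \;\subseteq\; \widehat{\mc{R}}(\Gamma) \cap \mc{R}(\Gamma)' \;=\; \mathbb{C} I.
\]

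No step looks genuinely hard here: the only subtle point is the tensor factorisation that yields $\mc{R}(\Gamma) \vee \mc{R}(\Gamma^c) = B(\mc{H}_0)$, which needs irreducibility of $\pi_0$ but notably does \emph{not} require Haag duality or the split property. Haag duality and the split property are used elsewhere in the paper (e.g.\ to produce the tensor-product description $\mc{R}(\Gamma) \simeq \mc{R}(\Lambda_1) \otimes \mc{R}(\Lambda_2)$ via the intermediate Type I factor), but the present Lemma does not actually need them — irreducibility of $\pi_0$ together with locality suffices.
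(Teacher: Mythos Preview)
Your argument is correct and follows essentially the same route as the paper: establish $\mc{R}(\Gamma)\vee\mc{R}(\Gamma^c)=\alg{B}(\mc{H})$ from irreducibility of $\pi_0$, take commutants to get $\mc{R}(\Gamma)'\cap\widehat{\mc{R}}(\Gamma)=\mathbb{C} I$, and deduce that both algebras are factors. You simply spell out in more detail the steps the paper leaves implicit (the tensor factorisation giving the join identity, and why irreducibility of the inclusion forces both centres to be trivial), and you are right that neither Haag duality nor the split property is needed here.
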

\begin{proof}
Let us first show that both von Neumann algebras are indeed factors. Since $\pi_0$ is an irreducible representation, it follows that $\mc{R}(\Gamma) \vee \mc{R}(\Gamma^c) = \alg{B}(\mc{H})$. Note that we have that
\[
	\left( \mc{R}(\Gamma)' \cap \widehat{\mc{R}}(\Gamma) \right)' = \mc{R}(\Gamma) \vee \mc{R}(\Gamma^c) = \alg{B}(\mc{H}).
\]
It follows that $\mc{R}(\Gamma)' \cap \widehat{\mc{R}}(\Gamma) = \mathbb{C} I$, that is, the inclusion is irreducible. It then automatically follows that $\mc{R}(\Gamma)$ and $\widehat{\mc{R}}(\Gamma)$ are factors.
\end{proof}

As mentioned before, we assume that $\mc{R}(\Gamma)$ and $\widehat{\mc{R}}(\Gamma)$ are infinite factors. Note that this is essentially an additional assumption on the representation $\pi_0$. Nevertheless, it is true under quite general (and natural) assumptions. For example, if $\pi_0$ is the GNS representation of a pure translational invariant state $\omega_0$, it follows that $\mc{R}(\Gamma)$ is infinite, since it is (with the split property) naturally isomorphic to $\mc{R}(\Lambda_1) \otimes \mc{R}(\Lambda_2)$, both of the tensor factors being infinite factors. The latter can be seen from the proof of~\cite[Thm. 5.1]{toricendo} (c.f.~\cite[Prop. 5.3]{MR2281418}). In any case, we make this assumption only for simplicity. 

The previous Lemma provides us with irreducible subfactors and leads naturally to our main numerical quantity of interest: the Kosaki-Longo index associated to inclusions $\mc{R}(\Gamma) \subset \widehat{\mc{R}}(\Gamma)$. In principle this index can depend on the choice of $\Gamma$. However, the next lemma shows that it is monotonically increasing depending on the size of $\Gamma$.
\begin{proposition}
Let $\Gamma$ and $\widehat{\Gamma}$ be elements of $\mc{C}^2$, such that $\Gamma \subset \widehat{\Gamma}$. Then $[\widehat{\mc{R}}(\Gamma) : \mc{R}(\Gamma)] \leq [\widehat{\mc{R}}(\widehat{\Gamma})  : \mc{R}(\widehat{\Gamma})]$.
\end{proposition}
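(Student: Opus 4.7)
The plan is to invoke the variational characterization of the Kosaki--Longo index from equation~\eqref{eq:pimpop}: to prove $[\widehat{\mc{R}}(\Gamma) : \mc{R}(\Gamma)] \leq \mu$, where $\mu := [\widehat{\mc{R}}(\widehat{\Gamma}) : \mc{R}(\widehat{\Gamma})]$, it suffices to exhibit a single normal conditional expectation $\mc{E}'\colon \widehat{\mc{R}}(\Gamma) \to \mc{R}(\Gamma)$ satisfying $\mc{E}'(X) \geq \mu^{-1} X$ for all $X \in \widehat{\mc{R}}(\Gamma)_+$. My strategy is to obtain $\mc{E}'$ as the restriction of the minimal conditional expectation $\mc{E}\colon \widehat{\mc{R}}(\widehat{\Gamma}) \to \mc{R}(\widehat{\Gamma})$ of the larger pair (we may assume $\mu<\infty$, otherwise there is nothing to prove).

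First I would record the square of inclusions. That $\mc{R}(\Gamma) \subset \mc{R}(\widehat{\Gamma})$ is immediate from $\Gamma \subset \widehat{\Gamma}$; passing to complements $\widehat{\Gamma}^c \subset \Gamma^c$ and taking commutants also gives $\widehat{\mc{R}}(\Gamma) = \mc{R}(\Gamma^c)' \subset \mc{R}(\widehat{\Gamma}^c)' = \widehat{\mc{R}}(\widehat{\Gamma})$. The heart of the argument is to show $\mc{E}(\widehat{\mc{R}}(\Gamma)) \subset \mc{R}(\Gamma)$. Given $X \in \widehat{\mc{R}}(\Gamma)$, one already has $\mc{E}(X) \in \mc{R}(\widehat{\Gamma})$. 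For any $Y \in \mc{R}(\widehat{\Gamma} \setminus \Gamma) \subset \mc{R}(\Gamma^c)$, locality forces $Y$ to commute with $X \in \mc{R}(\Gamma^c)'$; since $Y \in \mc{R}(\widehat{\Gamma})$ is fixed by $\mc{E}$, the $\mc{R}(\widehat{\Gamma})$-bimodule property of the conditional expectation gives
\[
  \mc{E}(X) Y \;=\; \mc{E}(XY) \;=\; \mc{E}(YX) \;=\; Y \mc{E}(X).
\]
Combined with the automatic locality commutation of $\mc{E}(X) \in \mc{R}(\widehat{\Gamma})$ with $\mc{R}(\widehat{\Gamma}^c)$, together with the factorization $\mc{R}(\Gamma^c) = \mc{R}(\widehat{\Gamma} \setminus \Gamma) \vee \mc{R}(\widehat{\Gamma}^c)$, this places $\mc{E}(X)$ in $\mc{R}(\Gamma^c)' \cap \mc{R}(\widehat{\Gamma}) = \widehat{\mc{R}}(\Gamma) \cap \mc{R}(\widehat{\Gamma})$.

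The main obstacle is then the identification $\widehat{\mc{R}}(\Gamma) \cap \mc{R}(\widehat{\Gamma}) = \mc{R}(\Gamma)$. One direction is trivial; the other is an interior Haag-duality-type statement asserting that elements of $\mc{R}(\widehat{\Gamma})$ commuting with $\mc{R}(\widehat{\Gamma} \setminus \Gamma)$ lie in $\mc{R}(\Gamma)$. I would extract this from the distal split property: if one can arrange (possibly after enlarging $\widehat{\Gamma}$ slightly, which only weakens the inequality we wish to prove) a tensor product decomposition $\mc{R}(\widehat{\Gamma}) \cong \mc{R}(\Gamma) \otimes \mc{R}(\widehat{\Gamma} \setminus \Gamma)$ of infinite factors, then the relative commutant of the second tensor factor inside $\mc{R}(\widehat{\Gamma})$ is precisely $\mc{R}(\Gamma) \otimes \mathbb{C} I$, as desired. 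Once this identification is in hand, $\mc{E}' := \mc{E}|_{\widehat{\mc{R}}(\Gamma)}$ is a normal conditional expectation onto $\mc{R}(\Gamma)$, and the Pimsner--Popa bound $\mc{E}(X) \geq \mu^{-1} X$ transfers verbatim to $\widehat{\mc{R}}(\Gamma)_+$, yielding $[\widehat{\mc{R}}(\Gamma) : \mc{R}(\Gamma)] \leq [\widehat{\mc{R}}(\Gamma) : \mc{R}(\Gamma)]_{\mc{E}'} \leq \mu$.
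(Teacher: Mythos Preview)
Your overall strategy---restrict the minimal conditional expectation $\mc{E}$ of the larger pair and transfer the Pimsner--Popa bound---is exactly the paper's approach, and your reduction to the relative-commutant identity
\[
\widehat{\mc{R}}(\Gamma)\cap\mc{R}(\widehat{\Gamma})\;=\;\mc{R}(\widehat{\Gamma}\setminus\Gamma)'\cap\mc{R}(\widehat{\Gamma})\;=\;\mc{R}(\Gamma)
\]
is correct. The gap is in how you justify this identity. You propose a splitting $\mc{R}(\widehat{\Gamma})\cong\mc{R}(\Gamma)\otimes\mc{R}(\widehat{\Gamma}\setminus\Gamma)$, but the distal split property in Definition~\ref{def:conerel} is stated only for \emph{cones}: neither $\Gamma$ (a union of two cones) nor the collar $\widehat{\Gamma}\setminus\Gamma$ is a cone, and there is no assumption guaranteeing such a factorisation, nor even that $\mc{R}(\widehat{\Gamma}\setminus\Gamma)$ is a factor. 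Your fallback of ``enlarging $\widehat{\Gamma}$ slightly'' is circular: replacing $\widehat{\Gamma}$ by a larger $\widehat{\Gamma}'$ would prove $[\widehat{\mc{R}}(\Gamma):\mc{R}(\Gamma)]\le[\widehat{\mc{R}}(\widehat{\Gamma}'):\mc{R}(\widehat{\Gamma}')]$, and passing back to $\widehat{\Gamma}$ requires precisely the monotonicity you are trying to establish.

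The paper sidesteps this by first reducing (via transitivity) to enlarging a single cone, say $\Gamma=\Lambda_1\cup\Lambda_2$ and $\widehat{\Gamma}=\Lambda_1\cup\widehat{\Lambda}_2$. Since $\widehat{\Gamma}\in\mc{C}^2$, the split property \emph{between the two cones} gives $\mc{R}(\widehat{\Gamma})\simeq\mc{R}(\Lambda_1)\otimes\mc{R}(\widehat{\Lambda}_2)$. The first tensor factor already commutes with the collar algebra $\mc{R}(\widehat{\Lambda}_2\setminus\Lambda_2)$ by locality, so the problem reduces to computing $\mc{R}(\widehat{\Lambda}_2\setminus\Lambda_2)'\cap\mc{R}(\widehat{\Lambda}_2)$, which Haag duality for the single cones $\widehat{\Lambda}_2$ and $\Lambda_2$ identifies with $\mc{R}(\Lambda_2)$:
\[
\mc{R}(\widehat{\Lambda}_2)\cap\mc{R}(\widehat{\Lambda}_2\setminus\Lambda_2)'
=\bigl(\mc{R}(\widehat{\Lambda}_2^c)\vee\mc{R}(\widehat{\Lambda}_2\setminus\Lambda_2)\bigr)'
=\mc{R}(\Lambda_2^c)'=\mc{R}(\Lambda_2).
\]
In short, the right tensor factorisation is the one separating the two cones of $\widehat{\Gamma}$ (which \emph{is} granted by $\widehat{\Gamma}\in\mc{C}^2$), not one separating $\Gamma$ from its collar.
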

\begin{proof}
	We adapt the proof of~\cite[Prop. 5]{MR1838752} to the case at hand. Suppose that $\Gamma = \Lambda_1 \cup \Lambda_2 \in \mc{C}^2$, and choose a cone $\widehat{\Lambda}_2 \supset \Lambda_2$ such that $\widehat{\Gamma} = \Lambda_1 \cup \widehat{\Lambda}_2 \in \mc{C}^2$. Without loss of generality we may assume that $\lambda^{-1} = [\widehat{\mc{R}}(\widehat{\Gamma}) : \mc{R}(\widehat{\Gamma})] < \infty$. Let $\mc{E}_{\widehat{\Gamma}}$ be the corresponding conditional expectation. We claim that $\mc{E}_{\widehat{\Gamma}}$ restricts to a conditional expectation of $\widehat{\mc{R}}(\Gamma)$ onto $\mc{R}(\Gamma)$, hence 
\begin{equation}
	\label{eq:condineq}
	[\widehat{\mc{R}}(\Gamma) : \mc{R}(\Gamma)] \leq [\widehat{\mc{R}}(\widehat{\Gamma})  : \mc{R}(\widehat{\Gamma})]
\end{equation}
by the Pimsner-Popa inequality. It is enough to show that $\mc{E}_{\widehat{\Gamma}}$ maps $\widehat{\mc{R}}(\Gamma)$ into $\mc{R}(\Gamma)$. Note that $\widehat{\mc{R}}(\Gamma) \subset \mc{R}(\Lambda_2^c \cap \widehat{\Lambda}_2)'$. Since $\mc{E}_{\widehat{\Gamma}}$ acts trivially on $\mc{R}(\Lambda_2^c \cap \widehat{\Lambda}_2)$, it follows that
\[
\mc{E}_{\widehat{\Gamma}}(\widehat{\mc{R}}(\Gamma)) \subset \mc{R}(\Lambda_2^c \cap \widehat{\Lambda}_2)' \cap \mc{R}(\widehat{\Gamma}).
\]
We claim that the right-hand set is equal to $\mc{R}(\Gamma)$, from which equation~\eqref{eq:condineq} follows. To show this, note that by the split property we have 
\[
	\mc{R}(\widehat{\Gamma}) \simeq \mc{R}(\Lambda_1) \otimes \mc{R}(\widehat{\Lambda}_2).
\]
By locality it follows that $\mc{R}(\Lambda_1) \subset \mc{R}(\Lambda_2^c \cap \widehat{\Lambda}_2)'$. Moreover,
\[
\begin{split}
\mc{R}(\widehat{\Lambda}_2) &\cap \mc{R}(\Lambda_2^c \cap \widehat{\Lambda}_2)' = (\mc{R}(\widehat{\Lambda}_2)' \vee \mc{R}(\Lambda_2^c \cap \widehat{\Lambda}_2))'\\ &= (\mc{R}(\widehat{\Lambda}_2^c) \vee \mc{R}(\Lambda_2^c \cap \widehat{\Lambda}_2))' = \mc{R}(\Lambda_2^c)' = \mc{R}(\Lambda_2),
\end{split}
\]
where we used Haag duality twice. The result then follows. 
\end{proof}

\begin{remark}
	This should be contrasted with the case of rational conformal field theory on the circle~\cite{MR1838752}. There one considers two open intervals on the circle with disjoint closures, and associated algebras (in our notation) $\widehat{\mc{R}}(\Gamma)$ and $\mc{R}(\Gamma)$, with $\Gamma = I_1 \cup I_2$. Using the same proof as in the proposition above it follows that the index increases by increasing the intervals. However, in their case (the interior of) $\Gamma^c$ again consists of two open intervals. Using $[\widehat{\mc{R}}(\Gamma): \mc{R}(\Gamma)] = [\mc{R}(\Gamma)':\widehat{\mc{R}}(\Gamma)']$. The index at the right hand side is in fact the index associated to the two intervals in $\Gamma^c$ and one can conclude that the index is constant. In the present situation, however, $\Gamma^c$ is not the union of two cones that are sufficiently separated, hence the split property cannot be used any more.
\end{remark}

\begin{definition}
	\label{def:globalindex}
Let $\pi_0$ be a representation as above. Then we define the \emph{cone index} by $\mu_{\pi_0} = \inf_{\Gamma \in \mc{C}^2} [\widehat{\mc{R}}(\Gamma) : \mc{R}(\Gamma)]$.
\end{definition}
The restriction to regions of the from $\Gamma = \Lambda_1 \cup \Lambda_2$ is likely not essential. By looking at the proofs of Haag duality and the split property in the toric code~\cite{haagdtoric}, as well as at the calculation of $\mu_{\pi_0}$ in the next section, it is apparent that the geometry of the cones is not important. What is needed for the regions $\Lambda_1$ and $\Lambda_2$ are (at least) the following properties. First of all, it should be possible to choose (dual) paths to infinity in the cones, per the construction of the different superselection sectors~\cite{toricendo}. Secondly, they should be ``wide'' enough in the sense that any finite set of sites can be translated into the region. This ensures that the factors $\mc{R}(\Gamma)$ are infinite.\footnote{The index can be defined for finite factors as well (and indeed, it was first defined for subfactors of Type II$_1$), but there are technical differences between these cases.} And finally the regions $\Lambda_1$ and $\Lambda_2$ should not contain any ``holes'', so that there is only a boundary at the ``outside'' of the region. Hence the definition can be generalised,\footnote{The superselection criterion~\eqref{eq:cselect} should then be suitably modified as well.} but (at least for the toric code) we would not gain anything from it.

Under suitable conditions one can define a conjugate for a charge sector, say represented by a localised endomorphism $\rho$, which we assume to be irreducible for simplicity. Such a conjugate $\overline{\rho}$ defines a numerical quantity, the statistical dimension $d(\rho)$, for the sector. If a conjugate can be defined, this number is finite, and in fact one has the remarkable relation $d(\rho)^2 = [\mc{R}(\Lambda) : \rho(\mc{R}(\Lambda))]$, if $\Lambda$ is a cone containing the localisation region of $\rho$. This is well-known in algebraic quantum field theory~\cite{MR1027496,MR1059320}, but it is true in the present context as well. First note that $\rho$ is injective, hence $\rho(\mc{R}(\Lambda))$ is a factor, which by localisation and Haag duality is contained in $\mc{R}(\Lambda)$. For our purposes it is sufficient to restrict to the case where $\rho \upharpoonright \mc{R}(\Lambda)$ is irreducible: $\rho(\mc{R}(\Lambda))' \cap \mc{R}(\Lambda) = \mathbb{C} I$.  Hence in that case $\rho(\mc{R}(\Lambda)) \subset \mc{R}(\Lambda)$ is an irreducible subfactor. The claim then follows by constructing a suitable conditional expectation in the following way.

By the definition of a conjugate~\cite{MR1444286}, there are intertwiners $R, \overline{R}$ such that $R A = \overline{\rho} \circ \rho(A) R$ and similarly for $\overline{R}$. If such intertwiners exist, they can be chosen to satisfy $R^*R = \overline{R}^* \overline{R} := d(\rho) I$ and $\overline{R}^*\rho(R) = R^* \overline{\rho}(\overline{R}) = I$. The number $d(\rho)$ is called the \emph{statistical dimension} of $\rho$, and does not depend on the choice of representative for the sector $\rho$.

Now define a map $\mc{E} : \mc{R}(\Lambda) \to \rho(\mc{R}(\Lambda))$ by
\[
\mc{E}(A) := d(\rho)^{-1} \rho(R^* \overline{\rho}(A) R).
\]
This is a normal completely positive map. Moreover, note that $\mc{E}(\rho(A)) = \rho(R^* \overline{\rho}(\rho(A)) R) = \rho(A)$, since $\overline{\rho}$ is a conjugate. Because $\overline{\rho}$ is an endomorphism, the bimodule property is easily verified, and $\mc{E}$ is seen to be a conditional expectation. One can show that $\mc{E}(A^*A) \geq d(\rho)^{-2} A^*A$ (see e.g.~\cite[Sect. 3]{MR1444286}). The bound is satisfied by the projection $P_\rho = d(\rho)^{-1} \overline{R}\, \overline{R}^*$. It follows that $[\mc{R}(\Lambda) : \rho(\mc{R}(\Lambda))] = d(\rho)^2$.

Conversely, if this index is finite, one can \emph{define} a conjugate $\overline{\rho}$~\cite{MR1059320}, where $\overline{\rho} : \mc{R}(\Lambda) \to \mc{R}(\Lambda)$. However, this does not immediately give a localised and transportable endomorphism of $\alg{A}$, and hence a conjugate in the sense above. Essentially, what has to be shown is that the conjugate can be defined consistently for a larger localisation region $\widehat{\Lambda} \supset \Lambda$, and that by choosing an equivalent $\widehat{\rho} \cong \rho$ localised in some other cone, the resulting conjugates are unitarily equivalent as well. We will not investigate the issue in more detail here, since it is not necessary for our purposes, but see e.g.~\cite{MR1181069,MR1332979} for a similar problem in the context of algebraic quantum field theory.

We are now in a position to state the main result: the cone index $\mu_{\pi_0}$ allows us to give bounds on the number of superselection sectors.
\begin{theorem}
	\label{thm:upperbound}
The number of equivalence classes of representations satisfying the selection criterion is bounded from above by $\lfloor \mu_{\pi_0} \rfloor$. If each sector has a conjugate, one has in fact that $\sum_i d(\rho_i)^2 \leq \mu_{\pi_0}$, where the sum is over the different equivalence classes of cone-localised transportable endomorphisms, and $d(\rho_i)$ is the statistical dimension of a representative $\rho_i$.
\end{theorem}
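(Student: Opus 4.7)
The plan is to prove the stronger statement $\sum_i d(\rho_i)^2 \leq \mu_{\pi_0}$ (under the conjugate hypothesis); the bound on the number of classes then follows from $d(\rho_i) \geq 1$, and in the absence of conjugates a simplified variant using just charge transporters suffices for the first statement. Both arguments proceed $\Gamma$-by-$\Gamma$: fix $\Gamma = \Lambda_1 \cup \Lambda_2 \in \mc{C}^2$, prove $\sum_i d(\rho_i)^2 \leq [\widehat{\mc{R}}(\Gamma) : \mc{R}(\Gamma)]$, and pass to the infimum at the end. I assume the index is finite (else the bound is vacuous), and let $\mc{E} : \widehat{\mc{R}}(\Gamma) \to \mc{R}(\Gamma)$ be the unique normal conditional expectation afforded by Lemma~\ref{lem:factors}.

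For each equivalence class, pick a representative $\rho_i$ localised in $\Lambda_1$ with conjugate $\overline{\rho}_i$ also localised in $\Lambda_1$ and intertwiner $R_i \in (\id, \overline{\rho}_i \rho_i)$ satisfying $R_i^* R_i = d(\rho_i) I$. By transportability, pick equivalent endomorphisms $\sigma_i, \overline{\sigma}_i$ localised in $\Lambda_2$ together with unitary charge transporters $U_i$ (from $\rho_i$ to $\sigma_i$) and $W_i$ (from $\overline{\rho}_i$ to $\overline{\sigma}_i$); the fact that $U_i, W_i \in \widehat{\mc{R}}(\Gamma) = \mc{R}(\Gamma^c)'$ follows because all four endomorphisms act trivially on $\alg{A}(\Gamma^c)$. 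Form the isometries
\[
v_i := d(\rho_i)^{-1/2}\, W_i R_i \in \widehat{\mc{R}}(\Gamma).
\]
That $v_i^* v_i = I$ is immediate, and a direct chase gives the intertwining $v_i A = \overline{\sigma}_i \rho_i(A)\, v_i$ for $A \in \alg{A}$. Under the split isomorphism $\mc{R}(\Gamma) \simeq \mc{R}(\Lambda_1) \otimes \mc{R}(\Lambda_2)$, the endomorphism $\overline{\sigma}_i \rho_i$ of $\mc{R}(\Gamma)$ becomes $\rho_i \otimes \overline{\sigma}_i$, which is irreducible, pairwise inequivalent in $i$, and of Longo dimension $d(\rho_i) d(\overline{\sigma}_i) = d(\rho_i)^2$.

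The central step is to show that the sub-$\mc{R}(\Gamma)$-bimodules $v_i \mc{R}(\Gamma) \subset \widehat{\mc{R}}(\Gamma)$ are mutually orthogonal in the $\mc{E}$-valued inner product: for $i \neq j$ and all $x \in \mc{R}(\Gamma)$, $\mc{E}(v_j^* x v_i) = 0$. Using the intertwining relations for $v_i, v_j$ together with the $\mc{R}(\Gamma)$-bimodule linearity of $\mc{E}$, the element $\mc{E}(v_j^* x v_i) \in \mc{R}(\Gamma)$ is seen to be an intertwiner from $\rho_i \otimes \overline{\sigma}_i$ to $\rho_j \otimes \overline{\sigma}_j$, and irreducibility and inequivalence force it to vanish. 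Each bimodule $v_i \mc{R}(\Gamma)$ carries the structure of the $\mc{R}(\Gamma)$-$\mc{R}(\Gamma)$-bimodule associated to $\overline{\sigma}_i \rho_i$, whose Longo dimension equals $d(\rho_i)^2$ by the formula $d(\xi)^2 = [\mc{R}(\Gamma) : \xi(\mc{R}(\Gamma))]$ recalled just before the theorem (applied to $\xi = \overline{\sigma}_i \rho_i$). Additivity of the index under orthogonal sums of sub-bimodules then yields $\sum_i d(\rho_i)^2 \leq [\widehat{\mc{R}}(\Gamma) : \mc{R}(\Gamma)]$; taking the infimum over $\Gamma$ proves the second statement. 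For the first statement, one runs the same argument with the $U_i$'s in place of the $v_i$'s: these give orthogonal free rank-one sub-bimodules, producing $n \leq [\widehat{\mc{R}}(\Gamma) : \mc{R}(\Gamma)]$ without requiring conjugates.

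The main obstacle is the bimodule orthogonality $\mc{E}(v_j^* x v_i) = 0$: one must combine the $\mc{R}(\Lambda_1)$-intertwining produced by the $\rho_i$'s with the $\mc{R}(\Lambda_2)$-intertwining produced by the $\overline{\sigma}_i$'s and invoke the split property to place $\mc{E}(v_j^* x v_i)$ cleanly inside $\mc{R}(\Lambda_1) \otimes \mc{R}(\Lambda_2)$, where the vanishing on inequivalent irreducibles is standard. This is the cone-localised analogue of the corresponding step in the conformal-net argument of~\cite{MR1838752}; the remaining ingredients (additivity of the Kosaki--Longo index over orthogonal sub-bimodules in the Type~III setting) are routine given the conjugate/statistical dimension material reviewed immediately before the theorem.
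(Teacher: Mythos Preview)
Your approach is correct and is essentially the same as the paper's, which simply defers to Lemma~13 and Corollary~10 of~\cite{MR1838752}; you have in effect reconstructed those arguments in the cone setting. One point of phrasing: the space $v_i\,\mc{R}(\Gamma)$ is a right $\mc{R}(\Gamma)$-module but not obviously closed under left multiplication, so calling it a ``sub-bimodule'' and invoking ``additivity of the index over orthogonal sub-bimodules'' is slightly loose. The clean formulation is that the isometry $v_i \in \widehat{\mc{R}}(\Gamma)$ with $v_i\,a = \overline{\sigma}_i\rho_i(a)\,v_i$ for $a \in \mc{R}(\Gamma)$ exhibits $\overline{\sigma}_i\rho_i \simeq \rho_i \otimes \overline{\sigma}_i$ as a sub-endomorphism of the dual canonical endomorphism $\theta$ of the inclusion $\mc{R}(\Gamma) \subset \widehat{\mc{R}}(\Gamma)$; since $d(\theta) = [\widehat{\mc{R}}(\Gamma):\mc{R}(\Gamma)]$ and the $\rho_i \otimes \overline{\sigma}_i$ are pairwise inequivalent irreducibles of dimension $d(\rho_i)^2$, the bound follows. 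This is exactly the KLM mechanism, and your $\mc{E}$-orthogonality computation is the concrete manifestation of the inequivalence step.
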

\begin{proof}
	The first claim can be proven in essentially the same way as Lemma 13 of~\cite{MR1838752}. As for the inequality, without loss of generality we can assume $\mu_{\pi_0}$ to be finite. One can then check that the proof of Corollary 10 in~\cite{MR1838752} works in the present setting as well.
\end{proof}

The bound relies on the assumption that for any pair $\Lambda_1 \cup \Lambda_2 \in \mathcal{C}$ and any sector $[\rho]$, we can find a representative localised in $\Lambda_1$ and in $\Lambda_2$. This follows from the superselection criterion~\eqref{eq:cselect} and Haag duality. It may be conceivable that there are additional sectors for which this does not hold, for example because they can only be localised in cones with a certain minimum opening angle. There are two ways to handle such cases: one can either shrink the set $\mc{C}$ of admissible pairs of cones, or one can consider a supremum instead of an infimum in Definition~\ref{def:globalindex}. Note that in the context of rational CFT~\cite{MR1838752} (with the additional assumption that there is a ``modular PCT''), the index is equal to the sum of the squares of the dimension of each sector.\footnote{In the language of fusion categories, this means that the index is equal to the dimension of the fusion category of superselection sectors.} However, this result depends on the charge conjugation being related to the so-called \emph{modular PCT symmetry}. At the moment we do not have a comparable result in the setting of spin systems.

\begin{remark}
	We would like to stress that the analysis can in principle be done for \emph{any} model satisfying the assumptions, the most important ones being Haag duality and the distal split property for cones. Note for example that (local) Hamiltonians do not appear directly. One should note however that the representation $\pi_0$, being a ground state representation, \emph{does} strongly depend on the dynamics. Hence the Hamiltonian comes in precisely when selecting the representation $\pi_0$. In principle one can do the analysis with any representation $\pi_0$ satisfying the assumptions (that is, not necessarily a ground state representation), but the physical interpretation is less clear in that case. An interesting open question is to classify all the (local) Hamiltonians whose corresponding ground state representations satisfy the conditions of Haag duality and the distal split property, and in addition give rise to non-trivial representations satisfying the selection criterion~\eqref{eq:cselect}.
\end{remark}

\section{The toric code}\label{sec:toric}
In this section we apply the methods developed above to a specific model: Kitaev's toric code~\cite{MR1951039}. Again, we consider the model on an infinite 2D lattice~\cite{MR2345476,toricendo}. The assumptions of Haag duality and the distal split property for cones can be shown to hold in this model~\cite{haagdtoric}. We first recall the most important features and definitions~\cite{MR1951039,toricendo}.

For our purposes it is enough to define the model on a square lattice, as in Sect.~\ref{sec:charges}. At each bond of the lattice there is a qubit, hence the observables at each bond are the $2\times2$ matrices $M_2(\mathbb{C})$. A star $s$ consists of bonds sharing the same vertex of the lattice as their endpoints. Similarly, a plaquette consists of those four bonds that  enclose a square. To each star $s$ and plaquette $p$ we associate the operators
\[
A_s := \bigotimes_{j \in s} \sigma_j^x, \quad  B_p := \bigotimes_{j \in p} \sigma_j^z.
\]
With this notation we can introduce the local Hamiltonians $H_{\Lambda}$, describing the dynamics. If $\Lambda \in \mc{P}_f(\bonds)$, then
\[
	H_\Lambda = - \sum_{s \subset \Lambda} A_s - \sum_{p \subset \Lambda} B_p. 
\]	
The model has a unique pure ground state $\omega_0$ for these dynamics~\cite{MR2345476}. We denote $(\pi_0, \mc{H}, \Omega)$ for the corresponding GNS representation. Since $\alg{A}$ is simple, $\pi_0$ is injective and we will often identify $\pi_0(A)$ with $A$. The ground state is invariant under the action of the star and plaquette operators, $A_s \Omega = B_p \Omega = \Omega$.

Let $\xi$ be a set of bonds forming a path on the lattice. Let $\widehat{\xi}$ be a path on the dual lattice. We will identify $\widehat{\xi}$ with all the bonds that the path on the dual lattice crossed. For finite paths (or dual paths), we introduce the operators
\[
F_{\xi} = \bigotimes_{j \in \xi} \sigma^z_j, \quad F_{\widehat{\xi}} = \bigotimes_{j \in \widehat{\xi}} \sigma^z_j.
\]
It is easy to check that these operators commute with all star and plaquette operators, except with those stars which are based at the endpoints of the path $\xi$, or those plaquette operators associated to the endpoints of $\widehat{\xi}$. It is then easy to check that $F_\xi \Omega$ is again an eigenvector of the local Hamiltonians, and $F_\xi \Omega$ is interpreted as a state with two excitations at the endpoints of $\xi$.

Two important properties of these path operators that we will frequently use are that $F_{\xi} \Omega = F_{\xi'} \Omega$ if the (dual) paths $\xi$ and $\xi'$ have the same endpoints, and that $F_\xi \Omega = \Omega$ if $\xi$ is a loop. The distinction between paths and dual paths is often not important, hence we will usually not distinguish them in the sequel. If $\Lambda \subset \bonds$ is any subset of bonds, we write $\mathfrak{p}(\Lambda)$ for the set of finite paths and dual paths inside $\Lambda$.

To illustrate the general theory of the previous section, the aim here is to calculate the index $[ \widehat{\mc{R}}(\Gamma) : \mc{R}(\Gamma) ]$ for this model. This requires a good understanding of the algebra $\widehat{\mc{R}}(\Gamma)$. Using similar techniques as developed in~\cite{haagdtoric}, we can prove that this algebra can easily be described in terms of $\mc{R}(\Gamma)$ and the charge transporters constructed in~\cite{toricendo}. We briefly recall their construction. Let $\gamma_1$ ($\gamma_2$) be paths to infinity inside $\Lambda_1$ ($\Lambda_2$). For each $n \in \mathbb{N}$, write $\gamma_i(n)$ for the finite path consisting of the first $n$ parts of $\gamma_i$, and choose a path $\widehat{\gamma}_n$ connecting the $n$-th site of $\gamma_1$ with the $n$-th site of $\gamma_2$, such that the distance of the paths $\widehat{\gamma}_n$ to the endpoints of $\gamma_1$ and $\gamma_2$ tends to infinity as $n$ goes to infinity. We write $\gamma_1(n)\widehat{\gamma}_n\gamma_2(n)$ for the concatenated path. Then the sequence of operators $F_{\gamma_1(n)\widehat{\gamma}_n \gamma_2(n)}$ converges weakly to a unitary $V_X$, the ``charge transporter'' that transports a charge of type X from $\Lambda_1$ to $\Lambda_2$. In a similar way we obtain a unitary $V_Z$ by considering paths on the dual lattice. These unitaries will be fixed for the remainder of the paper. With this notation, the crucial lemma can be stated as follows.

\begin{lemma}
	\label{lem:vnagen}
Let $\Gamma \in \mathcal{C}^2$ with $\Gamma = \Lambda_1 \cup \Lambda_2$. Then $\widehat{\mc{R}}(\Gamma)$ is generated by $\mc{R}(\Gamma)$ and unitaries $V_X$ and $V_Z$. Here $V_X$ transports a charge of type $X$ from $\Lambda_1$ to $\Lambda_2$, and similarly for $V_Z$.
\end{lemma}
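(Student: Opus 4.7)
I would prove the two inclusions separately, using a bicommutant argument for the non-trivial direction. Let $\mc{M}$ denote the von Neumann algebra generated by $\mc{R}(\Gamma)$ together with the unitaries $V_X, V_Z$.

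\emph{The easy direction $\mc{M} \subseteq \widehat{\mc{R}}(\Gamma)$.} First I would check that $V_X \in \widehat{\mc{R}}(\Gamma) = \mc{R}(\Gamma^c)'$. Fix $B \in \alg{A}(\Lambda_0)$ for some finite $\Lambda_0 \subset \Gamma^c$. The in-cone segments $\gamma_1(n), \gamma_2(n)$ lie inside $\Gamma$ and are therefore disjoint from $\Lambda_0$, while by construction the auxiliary dual path $\widehat{\gamma}_n$ eventually avoids $\Lambda_0$ as $n \to \infty$. Hence for large $n$ the approximant $F_{\gamma_1(n)\widehat{\gamma}_n\gamma_2(n)}$ commutes with $B$; passing to the weak limit gives $[V_X, B] = 0$, and density in $\alg{A}(\Gamma^c)$ upgrades this to $V_X \in \mc{R}(\Gamma^c)'$. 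The argument for $V_Z$ is identical. Combined with the locality inclusion $\mc{R}(\Gamma) \subseteq \widehat{\mc{R}}(\Gamma)$ this yields $\mc{M} \subseteq \widehat{\mc{R}}(\Gamma)$.

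\emph{The reverse direction via commutants.} By von Neumann's bicommutant theorem it suffices to show $\mc{M}' \subseteq \widehat{\mc{R}}(\Gamma)' = \mc{R}(\Gamma^c)$. Applying Haag duality for cones to $\Lambda_1$ and $\Lambda_2$ individually one obtains
\[
\mc{R}(\Gamma)' = \mc{R}(\Lambda_1)' \cap \mc{R}(\Lambda_2)' = \mc{R}(\Lambda_1^c) \cap \mc{R}(\Lambda_2^c),
\]
so any $T \in \mc{M}'$ lies in $\mc{R}(\Lambda_1^c) \cap \mc{R}(\Lambda_2^c)$ and commutes with both $V_X$ and $V_Z$. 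The remaining task is to strengthen this to $T \in \mc{R}(\Lambda_1^c \cap \Lambda_2^c) = \mc{R}(\Gamma^c)$.

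\emph{Main obstacle.} This last implication is the technical heart of the lemma: elements of $\mc{R}(\Lambda_1^c) \cap \mc{R}(\Lambda_2^c)$ that are not already in $\mc{R}(\Gamma^c)$ must ``detect'' charge transport across the region separating $\Lambda_1$ from $\Lambda_2$, and $V_X, V_Z$ are precisely the unitaries implementing such transport. To make this rigorous I would adapt the techniques of~\cite{haagdtoric}: approximate $T$ weakly by local observables, use the toric-code loop identities $F_\xi \Omega = \Omega$ together with the path structure of $\alg{A}(\Lambda_1^c)$ and $\alg{A}(\Lambda_2^c)$ to write each approximant as a combination of a piece supported strictly in $\Gamma^c$ and pieces carrying an $X$- or $Z$-charge from $\Lambda_1$ to $\Lambda_2$, and then exploit the non-trivial braiding between $V_X, V_Z$ and these transversal ribbons: the hypotheses $[T, V_X] = [T, V_Z] = 0$ force the charged contributions to cancel in the weak limit, leaving only a piece in $\mc{R}(\Gamma^c)$. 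The hard part is controlling this ribbon decomposition in the weak topology uniformly enough to apply to a general $T \in \mc{R}(\Lambda_1^c) \cap \mc{R}(\Lambda_2^c)$, rather than just to explicit finite products of Pauli operators.
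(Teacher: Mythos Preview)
Your easy direction $\mc{M} \subseteq \widehat{\mc{R}}(\Gamma)$ is correct and essentially what the paper does in one line.

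For the reverse direction there is a genuine gap. Your reduction to the statement ``$T \in \mc{R}(\Lambda_1^c) \cap \mc{R}(\Lambda_2^c)$ and $[T,V_X]=[T,V_Z]=0$ imply $T \in \mc{R}(\Gamma^c)$'' is just the commutant of the lemma itself, so nothing has been gained yet. The proposed mechanism --- weakly approximate $T$ by local Pauli words, split each approximant into an uncharged piece plus ribbons carrying $X$- or $Z$-charge across the gap, and argue that commutation with $V_X,V_Z$ kills the charged pieces in the limit --- is where the argument breaks down. The approximants need not themselves commute with $V_X$ or $V_Z$, so there is no reason their charged components should vanish termwise; and you have no weakly continuous projection onto the ``uncharged'' part that would let you pass to the limit. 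You correctly flag this as ``the hard part'', but no device is offered to overcome it, and it is not clear one exists along these lines.

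The paper sidesteps this obstacle with a different idea. Instead of analysing $\mc{M}'$, it restricts both $\mc{M}$ and $\widehat{\mc{R}}(\Gamma)=\mc{R}(\Gamma^c)'$ to the cyclic subspace $\mc{H}_{\overline{\Gamma}}=\overline{\mc{M}\Omega}$, checks that both algebras are faithfully represented there, and then verifies the Rieffel--Van Daele criterion: the real space $(\widehat{\mc{M}}_s + i\,\widehat{\mc{R}(\Gamma^c)}_s)\Omega$ is dense in $\mc{H}_{\overline{\Gamma}}$. The point is that this density is a statement about \emph{vectors} $F\Omega$ with $F$ a finite product of path operators, and can be checked by elementary sign bookkeeping (finding an anti-commuting star/plaquette, or matching excitations by paths in $\Gamma^c$). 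No control of weak limits of operators is needed. Rieffel--Van Daele then gives $\widehat{\mc{M}}''=\widehat{\mc{R}(\Gamma^c)}'$ on the subspace, and this lifts back to $\mc{H}$ by the faithfulness of the restriction. That passage through a cyclic-vector density criterion is the missing ingredient in your sketch.
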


The proof of this lemma is done in a number of steps. The essential idea is as follows. Let us write $\mc{A}$ for the von Neumann algebra $\mc{R}(\Gamma) \vee \{V_X, V_Z\}$. Then we will show that $\mc{A}$ is the commutant of $\mc{B} := \alg{A}(\Gamma^c)$ (hence $\mc{B}' = \widehat{\mc{R}}(\Gamma)$). The problem becomes tractable by first restricting the algebras of interest to a subspace of the ground state Hilbert space $\mc{H}_{\Lambda}$.

\begin{definition}
	Let $\Gamma = \Lambda_1 \cup \Lambda_2$ be an element of $\mathcal{C}^2$. We define the subspace $\mc{H}_{\overline{\Gamma}}$ as the closed linear span of $\mc{A}\Omega$, where $\mc{A}$ is as defined above.
\end{definition}

A subset of $\mc{H}_{\overline{\Gamma}}$ whose linear span is dense in $\mc{H}_{\overline{\Gamma}}$ can be obtained by taking the ground state $\Omega$ and create pairs of excitations, where each end of the pair is contained in (or at the boundary of) $\Gamma$. It is clear that if both of the excitations of such a pair lie in the same cone, there is some path operator $F_\xi \in \alg{A}(\Gamma)$ creating this excitation. With the help of $V_X$ ($V_Z$) it is possible to create excitations of type X (Z) that lie in the two distinct cones. To formalise this, we first introduce new notation.

\begin{figure}
	\includegraphics[width=\textwidth]{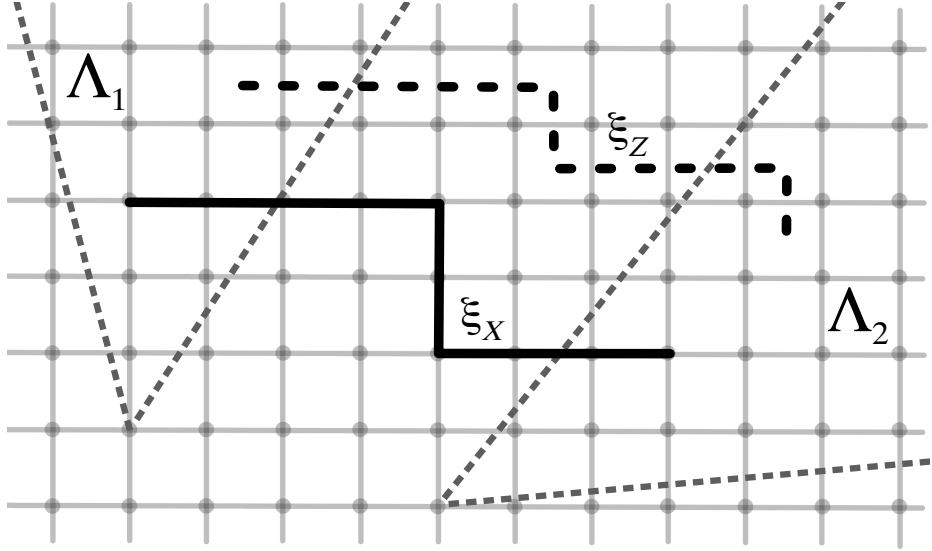}
	\caption{Cones $\Lambda_1$ and $\Lambda_2$, together with the choice of a path $\xi_{Z}$ and a dual path $\xi_{X}$.}
	\label{fig:paths}
\end{figure}

\begin{definition}
	Write $\Gamma = \Lambda_1 \cup \Lambda_2$ and fix a path (dual path) $\xi_Z$ ($\xi_X$) from a vertex (plaquette) in $\Lambda_1$ to a vertex (plaquette) in $\Lambda_2$, see Fig.~\ref{fig:paths}. Then $\mc{F}(\Gamma)$ is defined as the set of all $F_{\xi}$ with $\xi \in \mathfrak{p}(\Gamma)$. Similarly,
\[
	\mc{F}(\overline{\Gamma}) := \{ F_\xi : \xi \in \mathfrak{p}(\Gamma) \cup \{\xi_X, \xi_Z \} \}.
\] 
In addition we set
\[
\alg{F}(\Gamma) := \{ F_{1} \dots F_{n} : F_i \in \mc{F}(\Gamma) \},
\]
and likewise for $\alg{F}(\overline{\Gamma})$.
\end{definition}
The choice of $\xi_X$ and $\xi_Z$ will be fixed for the remainder of this section.

Intuitively, the operators in $\mc{F}(\Gamma)$ create pairs of excitations that are both in either $\Lambda_1$ or $\Lambda_2$, while $\mc{F}(\overline{\Gamma})$ contains operators that create a pair of excitations in either cone. Such operators generate the whole Hilbert space $\mc{H}_{\overline{\Gamma}}$, as is demonstrated in the next lemma. In particular it follows that the definition of this space does not depend on the specific choice of paths in the definition of the unitaries $V_X$ and $V_Z$.

\begin{lemma}
\label{lem:clspan}
The space $\mc{H}_{\overline{\Gamma}}$ is equal to the closure of the span of $\alg{F}(\overline{\Gamma}) \Omega$.
\end{lemma}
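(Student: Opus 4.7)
The plan is to prove $K = \mc{H}_{\overline{\Gamma}}$, where $K := \overline{\operatorname{span}}(\alg{F}(\overline{\Gamma})\Omega)$, by first establishing two key identities and then checking both inclusions.

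The first key identity is $V_X\Omega = F_{\xi_X}\Omega$ (and analogously $V_Z\Omega = F_{\xi_Z}\Omega$). Since $V_X = \wlim_n F_{\alpha_n}$ with $\alpha_n = \gamma_1(n)\widehat{\gamma}_n\gamma_2(n)$, and the concatenated paths $\alpha_n$ all share the \emph{same} endpoints, namely the initial vertices of $\gamma_1$ and $\gamma_2$, the endpoint-invariance of $F_\xi\Omega$ forces the sequence $F_{\alpha_n}\Omega$ to be constant. Choosing $\xi_X$ with matching endpoints (any discrepancy being a loop, which fixes $\Omega$) yields $V_X\Omega = F_{\xi_X}\Omega$. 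The second identity is the operator-level $\pm$-commutation $V_X F_\xi = \epsilon_\xi F_\xi V_X$ for every $F_\xi \in \mc{F}(\overline{\Gamma})$: each approximant $F_{\alpha_n}$ already $\pm$-commutes with the finite-support $F_\xi$, and because $\widehat{\gamma}_n$ recedes to infinity while the intersections of $\gamma_1(n),\gamma_2(n)$ with $\xi$ stabilize, the sign $\epsilon_\xi$ is eventually $n$-independent. The identity passes through the weak limit by continuity of multiplication by a fixed bounded operator in the weak operator topology. An analogous statement holds for $V_Z$.

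For $K \subseteq \mc{H}_{\overline{\Gamma}}$: given $F_1\cdots F_n\Omega \in \alg{F}(\overline{\Gamma})\Omega$, use $\pm$-commutation to move all occurrences of $F_{\xi_X}, F_{\xi_Z}$ to the right, and $F_{\xi_X}^2 = F_{\xi_Z}^2 = I$ to keep at most one of each. This yields $\pm G F_{\xi_Z}^{a} F_{\xi_X}^{b}\Omega$ with $G \in \alg{F}(\Gamma) \subseteq \mc{R}(\Gamma)$ and $a,b \in \{0,1\}$. Applying the first identity (and, if $a=b=1$, the commutation $F_{\xi_Z}V_X = \pm V_X F_{\xi_Z}$ to convert $F_{\xi_Z}$ into $V_Z$ acting on $\Omega$) rewrites the vector as $A\Omega$ for an operator $A$ built from $G$, $V_X$, $V_Z$, hence $A \in \mc{A}$.

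For $\mc{H}_{\overline{\Gamma}} \subseteq K$ show that $K$ is $\mc{A}$-invariant. The commutation relation gives
\[
V_X F_1\cdots F_n\Omega \;=\; \pm F_1\cdots F_n V_X\Omega \;=\; \pm F_1\cdots F_n F_{\xi_X}\Omega \;\in\; \alg{F}(\overline{\Gamma})\Omega,
\]
and analogously for $V_Z$, so $V_X K, V_Z K \subseteq K$. For $A \in \alg{A}(\Gamma)$, expand $A$ in the Pauli basis: each Pauli string in $\Gamma$ lies in $\operatorname{span}_{\mathbb{C}}(\alg{F}(\Gamma))$ via single-bond paths together with $\sigma^y = \pm i\sigma^x\sigma^z$, so $A \cdot \alg{F}(\overline{\Gamma})\Omega \subseteq \operatorname{span}(\alg{F}(\overline{\Gamma})\Omega)$. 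Strong density of $\alg{A}(\Gamma)$ in $\mc{R}(\Gamma)$ together with closedness of $K$ yields $\mc{R}(\Gamma)\cdot K \subseteq K$. Since $V_X, V_Z$ are self-adjoint, every polynomial in the generators preserves $K$, and by strong closure so does $\mc{A}$. As $\Omega \in K$ (empty product), $\mc{A}\Omega \subseteq K$, giving the inclusion.

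The main obstacle is verifying the operator-level commutation $V_X F_\xi = \epsilon_\xi F_\xi V_X$: weak limits do not respect products in general, so one must first check that the sign coming from $F_{\alpha_n} F_\xi = \epsilon_\xi F_\xi F_{\alpha_n}$ stabilizes in $n$ (using that $\widehat{\gamma}_n$ moves to infinity and $\xi$ has finite support) and then transfer the identity to $V_X$ using weak continuity of one-sided multiplication by a fixed bounded operator.
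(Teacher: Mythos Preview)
Your proof is correct and follows essentially the same approach as the paper: establish that $V_X,V_Z$ $\pm$-commute with the finite path operators, identify $V_X\Omega,V_Z\Omega$ as vectors in $\alg{F}(\overline{\Gamma})\Omega$, and conclude by density. One small correction: since $\xi_X,\xi_Z$ are \emph{fixed} in the definition of $\alg{F}(\overline{\Gamma})$ and need not share endpoints with $\gamma_1,\gamma_2$, the exact identity $V_X\Omega=F_{\xi_X}\Omega$ may fail; the paper writes $V_Z\Omega=F_{\xi_1}F_{\xi_Z}F_{\xi_2}\Omega$ with connecting paths $\xi_1\subset\Lambda_1,\ \xi_2\subset\Lambda_2$, but since $F_{\xi_1},F_{\xi_2}\in\mc{F}(\Gamma)$ this extra factor is harmless for both inclusions in your argument.
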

\begin{proof}
In the construction of $V_Z$, one considers a path to infinity in $\Lambda_1$, and one in $\Lambda_2$. Choose a path $\xi_1$ in $\Lambda_1$, from the endpoint of the (semi-)infinite path in $\Lambda_1$ to the endpoint of the path $\xi_X$ on the boundary of $\Lambda_1$. Similarly, we can choose a path $\xi_2$ in $\Lambda_2$. It follows that the path $\xi_1 \cup \xi_Z \cup \xi_2$ connects the endpoints of the two semi-infinite paths. By the choice of $V_Z$ we have $F_{\xi_1} F_{\xi_Z} F_{\xi_2} V_Z \Omega = \Omega$ (cf. Lemma 4.1 of ~\cite{toricendo}) and hence $V_Z \Omega = F_{\xi_1} F_{\xi_Z} F_{\xi_2} \Omega$. Similarly, $V_X \Omega = F_1 F_{\xi_X} F_2 \Omega$ for suitable path operators $F_1, F_2 \in \mc{F}(\Gamma)$.

Next we show that $V_X$ and $V_Z$ either commute or anti-commute with a given path operator $F \in \mc{F}(\Gamma)$. Consider a finite path (or dual path) $\xi$ in $\Gamma$ and let $F_\xi$ be the corresponding path operator. Then clearly $F_\xi F_{\gamma_1(n)\gamma_n\gamma_2(n)} = \pm F_{\gamma_1(n)\gamma_n\gamma_2(n)} F_\xi$, and since $F_\xi$ is local, this sign is the same for all large enough $n$. Hence, by separate weak continuity of multiplication, it follows that
\begin{equation}
	\label{eq:chcommute}
V_Z F_{\xi} = \wlim_{n \to \infty} F_{\gamma_1(n) \gamma_n \gamma_2(n)} F_{\xi} = \pm \wlim F_{\xi} F_{\gamma_1(n) \gamma_n \gamma_2(n)} = \pm F_{\xi} V_Z.
\end{equation}
It follows that $F V_Z = \pm V_Z F$ for every $F \in \alg{F}(\Gamma)$. Note moreover that $V_X^2 = V_Z^2 = I$. The claim then follows, by letting operators of the form $F V_i$ with $F \in \alg{F}(\Gamma)$ act on $\Omega$, since these generate a dense subalgebra of $\mc{A} = \mc{R}(\Gamma) \vee \{V_X, V_Z\}$.
\end{proof}

Recall that we are essentially studying a commutation problem. With help of the next lemma the problem can be reduced to studying operators acting on the \emph{subspace} $\mc{H}_{\overline{\Gamma}}$.
\begin{lemma}\label{lem:restriction}
The algebras $\mc{A}$ and $\mc{B}'$ can be restricted to the subspace $\mc{H}_{\overline{\Gamma}}$, and each element of these algebras is completely determined by this restriction.
\end{lemma}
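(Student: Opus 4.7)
The plan is to exhibit $\mc{H}_{\overline{\Gamma}}$ as a subspace invariant under both $\mc{A}$ and $\mc{B}'$, and then to derive injectivity of the restriction maps from the factor property of $\widehat{\mc{R}}(\Gamma)$ established in Lemma~\ref{lem:factors}. Invariance under $\mc{A}$ is immediate: the subspace is defined as $\overline{\mc{A}\Omega}$, and the von Neumann algebra $\mc{A}$ is self-adjoint, so it preserves its own cyclic subspace. The substantive step is to show that $\mc{B}' = \widehat{\mc{R}}(\Gamma)$ also preserves $\mc{H}_{\overline{\Gamma}}$; by the double commutant theorem this is equivalent to showing that the orthogonal projection $P$ onto $\mc{H}_{\overline{\Gamma}}$ lies in $\mc{B}'' = \mc{R}(\Gamma^c)$.

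I would accomplish this by identifying $P$ with the joint $+1$-spectral projection $Q$ of the family of star and plaquette operators $\{A_s : s \subset \Gamma^c\} \cup \{B_p : p \subset \Gamma^c\}$ that lie strictly inside $\Gamma^c$. Since each $(I+A_s)/2$ and $(I+B_p)/2$ is a local projection in $\mc{R}(\Gamma^c)$ and these mutually commute, their strong-operator product $Q$ lies in $\mc{R}(\Gamma^c)$ as well. The inclusion $P \leq Q$ is easy: by Lemma~\ref{lem:clspan}, $\mc{H}_{\overline{\Gamma}}$ is spanned by the vectors $F\Omega$ with $F \in \alg{F}(\overline{\Gamma})$, and every such $F$ is a product of path operators $F_\xi$ with $\xi \in \mathfrak{p}(\Gamma) \cup \{\xi_X, \xi_Z\}$. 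The endpoints of these paths all lie in $\Gamma$, so $F$ commutes with every $A_s, B_p$ sitting strictly inside $\Gamma^c$; combined with $A_s\Omega = B_p\Omega = \Omega$, this yields $A_s F\Omega = B_p F\Omega = F\Omega$, placing $F\Omega$ in $Q\mc{H}$.

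The reverse inclusion $Q \leq P$ is the main obstacle. My plan is to use the density of $\pi_0(\alg{A})\Omega$ in $\mc{H}$: every generator $F\Omega$, with $F$ a finite product of $\sigma^x$ and $\sigma^z$ operators, is a simultaneous eigenvector of all the $A_s, B_p$, and is orthogonal to $Q\mc{H}$ whenever $F$ creates any excitation strictly in $\Gamma^c$. Hence $Q\mc{H}$ is densely spanned by those $F\Omega$ whose excitations are entirely contained in $\Gamma$. For such an $F$, the standard parity constraint forces the number of vertex (respectively plaquette) excitations in $\Lambda_1$ and in $\Lambda_2$ to have equal parity: in the even case I pair them within each cone by finite paths inside that cone, and in the odd case I do the same except that the single remaining pair, which straddles the two cones, is connected by concatenating a segment in $\Lambda_1$ with the fixed path $\xi_Z$ (respectively $\xi_X$) and a segment in $\Lambda_2$. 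Invoking the identities $F_\xi\Omega = F_{\xi'}\Omega$ for paths with identical endpoints and $F_\xi\Omega = \Omega$ for loops then rewrites $F\Omega$ as $F'\Omega$ for some $F' \in \alg{F}(\overline{\Gamma})$, placing it in $\mc{H}_{\overline{\Gamma}}$.

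Granted $P = Q \in \mc{R}(\Gamma^c)$, the algebra $\mc{B}'$ leaves $\mc{H}_{\overline{\Gamma}}$ invariant, and injectivity of both restriction maps is soft: the kernel of the restriction $B' \mapsto B'P$ on $\widehat{\mc{R}}(\Gamma)$ is a weakly closed two-sided ideal in this factor, and it is not all of $\widehat{\mc{R}}(\Gamma)$ because $P = I \cdot P \neq 0$, so the kernel must be $\{0\}$; the conclusion for $\mc{A} \subset \widehat{\mc{R}}(\Gamma)$ follows by restriction. The main difficulty in this plan is the geometric deformation step in the previous paragraph, where one has to keep careful track of signs under reordering of the $F_\xi$'s and of the parity of excitations split between the two cones in order to verify that the rewritten operator really does belong to $\alg{F}(\overline{\Gamma})$.
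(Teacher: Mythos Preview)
Your argument is correct. For the invariance part it is, beneath the different packaging, the same mechanism the paper uses: the paper shows that a generic vector $\eta$ either carries an excitation detected by some $A_s$ or $B_p$ with $s,p\subset\Gamma^c$ (in which case $\eta\perp B\zeta$ for all $B\in\mc{B}'$, $\zeta\in\mc{H}_{\overline{\Gamma}}$), or else already lies in $\mc{H}_{\overline{\Gamma}}$. Your identification $P=Q$ is simply the clean reformulation of this dichotomy, and writing $Q$ as a strong limit of local projections makes the membership $P_{\overline{\Gamma}}\in\mc{R}(\Gamma^c)$ explicit from the outset, whereas the paper only records it as a corollary after the proof. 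Your worry about signs in the deformation step is unnecessary: you only need $F\Omega=\pm F'\Omega$ for some $F'\in\alg{F}(\overline{\Gamma})$, and either sign places $F\Omega$ in $\mc{H}_{\overline{\Gamma}}$.

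Where the two proofs genuinely diverge is in the injectivity step. The paper does not appeal to Lemma~\ref{lem:factors}; instead it observes that $\alg{F}(\Gamma^c)\,\mc{H}_{\overline{\Gamma}}$ is total in $\mc{H}$ and that every element of $\mc{A}$ or of $\mc{B}'$ commutes with $\alg{F}(\Gamma^c)\subset\mc{B}$, so two operators agreeing on $\mc{H}_{\overline{\Gamma}}$ agree on each $F\eta$ and hence everywhere. Your route via the factor property of $\widehat{\mc{R}}(\Gamma)$ is shorter and avoids this second density statement, at the price of importing Lemma~\ref{lem:factors}; the paper's route is more self-contained and, as a byproduct, yields the useful fact that $\mc{B}\cdot\mc{H}_{\overline{\Gamma}}$ is dense in $\mc{H}$.
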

\begin{proof}
	It is immediate from the definition of $\mc{H}_{\overline{\Gamma}}$ that $\mc{A}$ can be restricted to this subspace. To show a similar claim for $\mc{B}'$, let $B \in \mc{B}'$. We will show that for $\zeta \in \mc{H}_{\overline{\Gamma}}$, we have $B \zeta \perp \eta$ for all $\eta \in \mc{H}_{\overline{\Gamma}}^\perp$. Let $\zeta = F_1 \cdots F_k V_X^{\sigma(x)} V_Z^{\sigma(z)} \Omega$, where $F_i \in \mc{F}(\Gamma)$ and $\sigma(x), \sigma(z) \in \{0,1\}$. Note that such vectors span a dense subset of $\mc{H}_{\overline{\Gamma}}$. Let $\eta = \widehat{F}_1 \cdots \widehat{F}_n F \Omega$, with $\widehat{F}_i \in \mc{F}(\Gamma^c)$ and $F \in \alg{F}(\overline{\Gamma})$.

	First note that $A_s \zeta = B_p \zeta = \zeta$ for all $\zeta \in \mc{H}_{\overline{\Gamma}}$ and $A_s, B_p$ elements of $\alg{A}(\Gamma^c)$, since such operators commute with any $F \in \alg{F}(\overline{\Gamma})$. Suppose moreover that there is such an $A_s$ such that $\{A_s, \widehat{F}_1 \cdots \widehat{F}_k\} = 0$. Then
\[
	(\eta, B \zeta) = (\eta, B A_s \zeta) = (A_s \eta, B \zeta) = - (\eta, B \zeta),
\]
hence $\eta \perp B \zeta$. The same is true if there is such a $B_p$.

If no such $A_s$ or $B_p$ exist, $\eta$ can only contain excitations at the boundary (or inside) the cones $\Gamma$, since otherwise there would be an anti-commuting star or plaquette operator in $\Gamma^c$. But then in fact $\eta \in \mc{H}_{\overline{\Gamma}}$. Pairs of excitations in a single cone can obviously be obtained in this way, and a pair where one of the excitations is in $\Lambda_1$ and the other in $\Lambda_2$ can be obtained by acting with $V_X$ (or $V_Z$) on $\Omega$. But this means that $\eta \in \mc{H}_{\overline{\Gamma}}$.\footnote{The reader who is uneasy with this brief justification is referred to the proof of Lemma~\ref{lem:sadjoint} for similar considerations, or to the proof of~\cite[Lemma 3.6]{haagdtoric}, which is alike.}

Finally, note that linear span of vectors of the form $F \eta$, with $\eta \in \mc{H}_{\overline{\Gamma}}$ and $F \in \alg{F}(\Gamma^c)$ is dense in $\mc{H}$. Now suppose that $A_1$ and $A_2$ agree on $\mc{H}_{\overline{\Gamma}}$, with $A_1, A_2$ in $\mc{A}$ or $\mc{B}'$. Then we have
\[
	A_1 F \eta = F A_1 \eta = F A_2 \eta = A_2 F \eta. 
\]
Since $A_1$ and $A_2$ are bounded linear operators, it follows that $A_1 = A_2$. This proves the claim that the operators are determined by their restrictions.
\end{proof}
The proof implies that $P_{\overline{\Gamma}}$ is an element of $\mc{B}'' = \mc{R}(\Gamma^c)$. We will write $\widehat{\mc{B}} := P_{\overline{\Gamma}} \mc{B}'' P_{\overline{\Gamma}}$ for $\mc{B}''$ restricted to $\mc{H}_{\overline{\Gamma}}$, and similarly for $\widehat{\mc{A}}$.

The crucial ingredient of the proof of Lemma~\ref{lem:vnagen} is the following result, which allows us to invoke a theorem of Rieffel and Van Daele~\cite{MR0383096} on the commutant of von Neumann algebras.
\begin{lemma}
	\label{lem:sadjoint}
	The linear space $(\widehat{\mc{A}}_s + i \widehat{\mc{B}}_s) \Omega$ is dense in $\mc{H}_{\overline{\Gamma}}$, where $\widehat{\mc{A}}_s$ is the self-adjoint part of $\widehat{\mc{A}}$.
\end{lemma}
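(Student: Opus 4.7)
The plan is to exhibit operators $A \in \widehat{\mc{A}}_s$ and $B \in \widehat{\mc{B}}_s$ approximating any vector of a dense subset. By Lemma~\ref{lem:clspan}, the span of $\alg{F}(\overline{\Gamma})\Omega$ is dense in $\mc{H}_{\overline{\Gamma}}$, so it suffices to decompose (or approximate) each vector $F\Omega$ with $F \in \alg{F}(\overline{\Gamma})$. Since every such $F$ is a product of self-adjoint path operators and Pauli matrices anticommute whenever a direct and a dual path share a bond, one has $F = \lambda G$ for some $\lambda \in \{\pm 1, \pm i\}$ and self-adjoint $G$. Using the identity $F_{\xi_X}\Omega = V_X\Omega$ and its analog for $V_Z$ (established in the proof of Lemma~\ref{lem:clspan}), we may absorb the virtual path operators into the charge transporters and so assume $G \in \mc{A}$, since any resulting stabilizer factors act as the identity on $\Omega$.

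If $\lambda = \pm 1$, then $F\Omega = \pm G\Omega \in \widehat{\mc{A}}_s\Omega$, and the decomposition is trivial with $B = 0$. The substantial case is $\lambda = \pm i$, which occurs precisely when $F$ contains an odd number of crossings between direct and dual path operators, producing $\sigma^y$-type factors in $G$. Then $F\Omega = \pm iG\Omega$ with $G \in \widehat{\mc{A}}_s$, and the task reduces to finding $B \in \widehat{\mc{B}}_s$ with $B\Omega$ approximating $\pm G\Omega$; any residual discrepancy can be absorbed into the $\widehat{\mc{A}}_s$ component.

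To construct such a $B$, I would exploit the deformation invariance $F_\xi\Omega = F_{\xi'}\Omega$ for paths sharing endpoints modulo loops, together with the triviality of stabilizers $A_s, B_p$ on $\Omega$. The excitation pattern of $G\Omega$ consists of pairs of $X$- and $Z$-excitations located in $\Gamma$, and for pairs whose endpoints lie on the boundary of $\Gamma$ one can choose a Hermitian Pauli-string in $\mc{R}(\Gamma^c) = \mc{B}''$ producing exactly that configuration. Combining such boundary operators with the corresponding interior operators (placed in the $\widehat{\mc{A}}_s$ slot) assembles the required decomposition for generic $F$.

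The hard part is the case where $G\Omega$ contains excitations strictly in the interior of $\Gamma$, because no path operator in $\mc{R}(\Gamma^c)$ can terminate at an interior vertex of $\Gamma$. The proposed remedy is to factor $G = G_{\mathrm{int}}G_{\mathrm{bd}}$ into interior and boundary pieces, placing $G_{\mathrm{int}}$ into the $\widehat{\mc{A}}_s$ component and realizing only $G_{\mathrm{bd}}\Omega$ via $B \in \widehat{\mc{B}}_s$, so that the crossing structure responsible for the $\sigma^y$ factors is carried by the boundary operator. Carrying this out rigorously requires a careful bookkeeping of signs and of which excitation configurations are reachable from $\mc{R}(\Gamma^c)$; I expect this to follow along the lines of the graph-theoretic analysis used to prove Haag duality for the toric code in~\cite{haagdtoric}.
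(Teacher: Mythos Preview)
Your reduction via Lemma~\ref{lem:clspan} and the observation that every $F\in\alg{F}(\overline{\Gamma})$ satisfies $F^*=\pm F$ are correct, as is the identification of the essential task: for self-adjoint $G\in\widehat{\mc{A}}$, exhibit $iG\Omega$ in the \emph{real} space $(\widehat{\mc{A}}_s+i\widehat{\mc{B}}_s)\Omega$. (The Rieffel and van Daele criterion requires real-linear density, so your ``trivial'' case $\lambda=\pm1$ still leaves $iF\Omega=\pm iG\Omega$ to be handled --- the same problem.)

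The genuine gap is your treatment of interior excitations. The factoring $G=G_{\mathrm{int}}G_{\mathrm{bd}}$ cannot produce the required decomposition: even if some $B\in\widehat{\mc{B}}_s$ satisfies $B\Omega=G_{\mathrm{bd}}\Omega$, the operator $G_{\mathrm{int}}B$ lies in neither $\widehat{\mc{A}}$ nor $\widehat{\mc{B}}$, and attempting $iG\Omega=A'\Omega+iB'\Omega$ with $B'=B$ forces $A'\Omega=i(G\Omega-B\Omega)$, which is not obtainable from a self-adjoint $A'$. For a concrete obstruction, take $F=F_{\xi}F_{\widehat{\xi}}$ with a path $\xi$ and a dual path $\widehat{\xi}$ crossing once, both contained strictly in the interior of $\Lambda_1$. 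Then $F^*=-F$, all four excitations are interior, $G_{\mathrm{bd}}=I$, and your scheme produces nothing.

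The paper's argument avoids $\widehat{\mc{B}}$ altogether whenever there is an interior excitation. If $G\Omega$ carries an excitation at a star $s$ (or plaquette $p$) inside $\Gamma$, then $A_s\in\mc{R}(\Gamma)\subset\widehat{\mc{A}}$ anti-commutes with $G$ and satisfies $A_s\Omega=\Omega$; hence $iA_sG\in\widehat{\mc{A}}_s$ and $(iA_sG)\Omega=-iG\Omega$, so $iG\Omega\in\widehat{\mc{A}}_s\Omega$ directly. Only when \emph{every} excitation of $G\Omega$ sits on the boundary of $\Gamma$ does one need $\widehat{\mc{B}}_s$: connect the boundary excitations by paths $\xi_i$ in $\Gamma^c$ to form $B=\prod_iF_{\xi_i}$ with $B\Omega=\pm G\Omega$, and verify $B^*=B$ by noting that $GB$ is a product of (possibly infinite) closed loops, hence self-adjoint, while $[G,B]=0$. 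Your crossing-parity intuition is exactly what is needed at this last step --- but only there.
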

\begin{proof}
	By Lemma~\ref{lem:clspan} it is enough to show that we can write vectors of the form $F \Omega$, with $F \in \mc{F}(\overline{\Gamma})$, in the desired form. Suppose that we are given such a vector $\eta = F \Omega$. Then there are $F_i \in \mc{F}(\Gamma)$ such that $F \Omega = A\Omega$ where $A := F_1 \dots F_k V_X^{\sigma(x)} V_Z^{\sigma(z)}$ with $\sigma(x),\sigma(z) \in \{0,1\}$. Clearly, $A \in \widehat{\mc{A}}$ and by the commutation properties of the $F_i$ and $V_k$ we have $A^* = \pm A$. If $A = A^*$ we can get real multiples of the vector. In the other case, $i A$ is self-adjoint and we get imaginary multiples of $\eta$. To get imaginary multiples (respectively real multiples, if $A^* = - A$), consider the case where there is a star $s$ in $\Gamma$ such that $A_s A = - A A_s$. Then either $i A_s A$ or $A_s A$ is self-adjoint, and $A_s A \Omega = - A \Omega$. Hence we can get real and imaginary multiples of $F \Omega$. The case of an anti-commuting plaquette operator is handled, mutatis mutandis, in the same way.

	Remains the case where such anti-commuting stars or plaquettes do not exist. Consider for convenience the case $A=A^*$: the goal is to find $B \in \widehat{\mc{B}}_s$ such that $i B \Omega = i A \Omega$. Since there are no anti-commuting plaquette and star operators inside $\Gamma$, $A$ can only create excitations at the boundary of the two cones. Since the path operators always create pairs of excitations, their total number must be even. We can then join pairs of these excitations by paths $\xi_i$ in $\Gamma^c$, and consider the corresponding path operators $F_i \in \mc{B}_s$. The question is if we can choose the paths $\xi_i$ in such a way that $\prod_i F_i$ is self-adjoint. This can only fail if there is an \emph{odd} number of crossings of paths and dual paths, since this would lead to an odd number of minus signs when commuting the path operators. But this cannot be the case, since it must then follow that $A^* = -A$, a contradiction. 
	
To see why this is the case, write $B = \prod_i F_i$. Since $A$ and $B$ commute (by locality), we have $(AB)^* = A^* B^* = A B^*$. But $B^* = \pm B$, so the conclusion follows if we can show that $AB$ is in fact self-adjoint. Since by construction all excitations are connected to another excitation by paths $\xi_i$, there are no free excitations left, and $AB$ is a product of path operators corresponding to loops, where we regard the case where $A$ contains a factor $V_X$ or $V_Z$ as an ``infinite'' loop. Note that for a finite loop the corresponding path operator is a product of star or plaquette operators, hence in that case $AB$ is self-adjoint. In the case there are ``infinite'' loops, it follows that $AB$ is the weak limit of a sequence of operators corresponding to closed loops, and the result follows by continuity of the $*$-operation with respect to the weak operator topology.
	
The remaining case where $A = - A^*$ (and hence $i A$ is self-adjoint) is handled in a similar way.
\end{proof}

This gives us all the necessary ingredients for the proof of the main Lemma of this section.
\begin{proof}[Proof of Lemma~\ref{lem:vnagen}]
	The proof is essentially the same as that of~\cite[Thm. 3.1]{haagdtoric}. We recall the gist of the argument. By Lemma~\ref{lem:sadjoint} and~\cite[Thm. 2]{MR0383096}, $\widehat{\mc{A}}'' = \widehat{\mc{B}}'$. Let $B \in \widehat{\mc{R}}(\Gamma)$. Then $B$ restricts to $\mc{H}_{\overline{\Gamma}}$, and its restriction $B_{\overline{\Gamma}}$ is in $\widehat{\mc{B}}'$ (c.f.~\cite[Prop. II.3.10]{MR1873025}). But this means that $B_{\overline{\Gamma}} = A_{\overline{\Gamma}}$ for some $A_{\overline{\Gamma}} \in \widehat{\mc{A}}''$. This $A_{\overline{\Gamma}}$ is in turn the restriction of some $A \in \mc{A}$. A straightforward calculation shows that we must in fact have $A = B'$, proving that $\widehat{\mc{R}}(\Gamma) \subset \mc{R}(\Gamma) \vee \{V_X, V_Z\}$. The reverse inclusion follows from locality and localisation (that is, the fact that the $V_k$ transport charges localised in $\Lambda_1 \subset \Gamma$ to $\Lambda_2 \subset \Gamma$).
\end{proof}

We now have a description of $\widehat{\mc{R}}(\Gamma)$ in terms of $\mc{R}(\Gamma)$ and the charge transporters. It turns out that in fact there is an even more explicit description of $\widehat{\mc{R}}(\Gamma)$, which clarifies the inclusion $\mc{R}(\Gamma) \subset \widehat{\mc{R}}(\Gamma)$. The key point is that we can consider the unitaries $V_X, V_Y$ and $V_Z$ as a representation of some finite group, and that these unitaries induce an action on $\mc{R}(\Gamma)$. It turns out that $\widehat{\mc{R}}(\Gamma)$ is isomorphic to the crossed product of $\mc{R}(\Gamma)$ by the action of this group. This would follow for example if this action were outer, by~\cite[Exercise X.1.1]{MR1943006}, but this is not immediately clear. We will therefore give such an isomorphism explicitly.

Let us first clarify how the charge transporters give rise to an action. First of all, without loss of generality we may assume that $V_X$ and $V_Z$ commute.\footnote{In the case they \emph{anti}-commute, we obtain merely a \emph{projective} representation. The induced action, and hence the crossed product, does not change, but the proof of Lemma~\ref{lem:crossed} would have to be adapted.} Let $G = \mathbb{Z}_2 \times \mathbb{Z}_2$, and write $X = (0,1)$, $Z=(1,0)$ and $Y = XZ = (1,1)$ for the non-trivial elements of $G$. Then the map $g \mapsto V_g$, where $V_Y := V_X V_Z$ is a unitary representation of $G$. Define $\alpha_g = \Ad V_g$. The claim is that this defines an action of $G$ on $\mc{R}(\Gamma)$. It is enough to show that $\alpha_g$ maps $\mc{R}(\Gamma)$ into itself, since $V_g^2 = I$ then makes it clear that $\alpha_g$ is an automorphism. Let $\xi$ be a (dual) path in $\Gamma$ and write $F_\xi$ for the corresponding operator. Then $V_g F_\xi = \pm F_\xi V_g$ by equation~\eqref{eq:chcommute}. But these path operators generate a dense subalgebra of $\mc{R}(\Gamma)$ (in the weak topology), and since $\Ad V_g$ is weakly continuous, it follows that $\alpha_g(\mc{R}(\Gamma)) \subset \mc{R}(\Gamma)$. In the next Lemma we show that $\widehat{\mc{R}}(\Gamma)$ is the crossed product of $\mc{R}(\Gamma)$ with respect to this action.
\begin{lemma}\label{lem:crossed}
	The algebra $\widehat{\mc{R}}(\Gamma)$ is isomorphic to a crossed product $\mc{R}(\Gamma) \rtimes_\alpha G$, where $G = \mathbb{Z}_2 \times \mathbb{Z}_2$ and $\alpha$ is the action of $G$ on $\mc{R}(\Gamma)$ defined above. Such an isomorphism $\Phi$ can be chosen in such a way that $\Phi(\mc{R}(\Gamma)) = \mc{R}(\Gamma)$.
\end{lemma}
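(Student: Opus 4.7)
The plan is to invoke the universal property of the crossed product and then reduce injectivity to a factoriality argument. First, the pair $(\mathrm{id}_{\mc{R}(\Gamma)}, V)$ is a covariant representation of $(\mc{R}(\Gamma), G, \alpha)$ on $\mc{H}$: since by assumption $V_X$ and $V_Z$ commute, $g \mapsto V_g$ is a unitary representation, and $\Ad V_g = \alpha_g$ on $\mc{R}(\Gamma)$ by construction. Universality then produces a normal $*$-homomorphism $\Phi\colon \mc{R}(\Gamma) \rtimes_\alpha G \to B(\mc{H})$ with $\Phi(A u_g) = A V_g$, where $u_g$ denotes the canonical implementing unitary in the crossed product. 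Since $G$ is finite, every element of $\mc{R}(\Gamma) \rtimes_\alpha G$ is uniquely a finite sum $\sum_{g \in G} A_g u_g$ with $A_g \in \mc{R}(\Gamma)$. The image of $\Phi$ is a von Neumann subalgebra of $\widehat{\mc{R}}(\Gamma)$ containing $\mc{R}(\Gamma)$ and every $V_g$, so by Lemma~\ref{lem:vnagen} it equals $\widehat{\mc{R}}(\Gamma)$. By construction $\Phi$ restricts to the identity on $\mc{R}(\Gamma)$, which gives the final claim $\Phi(\mc{R}(\Gamma)) = \mc{R}(\Gamma)$.

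For injectivity, recall that $\widehat{\mc{R}}(\Gamma)$ is a factor by Lemma~\ref{lem:factors}. The kernel of $\Phi$ is a $\sigma$-weakly closed two-sided ideal of the crossed product, so it suffices to show that $\mc{R}(\Gamma) \rtimes_\alpha G$ is itself a factor (the kernel would then be $\{0\}$ or the whole algebra, and $\Phi$ is manifestly nonzero). By the standard computation of the center of a crossed product, factoriality holds provided $\alpha_g$ is outer on $\mc{R}(\Gamma)$ for every $g \neq e$. This in turn reduces to a concrete statement: if $\alpha_g = \Ad U$ for some unitary $U \in \mc{R}(\Gamma)$, then $U^* V_g \in \widehat{\mc{R}}(\Gamma)$ commutes with $\mc{R}(\Gamma)$, so irreducibility of the inclusion (Lemma~\ref{lem:factors}) forces $U^* V_g \in \mathbb{C} I$, whence $V_g \in \mc{R}(\Gamma)$. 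So it suffices to show $V_g \notin \mc{R}(\Gamma)$ for each $g \neq e$.

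This last step is the main obstacle. The strategy is to establish the stronger orthogonality $V_g \Omega \perp \overline{\mc{R}(\Gamma) \Omega}$, which is incompatible with $V_g \in \mc{R}(\Gamma)$ since $V_g \Omega \neq 0$. Consider $g = X$: the vector $V_X \Omega$ carries an X-excitation at the starting vertex $v_1$ of $\gamma_1$, in the sense that $A_{s_1} V_X \Omega = -V_X \Omega$ for the star $s_1$ based at $v_1$, which follows by weak continuity from the corresponding anticommutation relation at finite $n$. Pick a finite set $R$ of stars in $\Lambda_1$ containing $s_1$ but no star based at $v_2$, and set $W_R = \prod_{s \in R} A_s \in \mc{R}(\Lambda_1) \subset \mc{R}(\Gamma)$, a self-adjoint unitary with $W_R \Omega = \Omega$ and $W_R V_X \Omega = -V_X \Omega$. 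For any local observable $A \in \alg{A}(\Gamma)$, taking $R$ large enough that the $\sigma^x$ factors on the boundary bonds of $R$ avoid the support of $A$ makes $W_R$ commute with $A$, yielding
\[
    \langle A \Omega, V_X \Omega \rangle = \langle W_R A \Omega, W_R V_X \Omega \rangle = -\langle A \Omega, V_X \Omega \rangle = 0.
\]
By Kaplansky density this extends to all of $\mc{R}(\Gamma)$. The case $g = Z$ is handled analogously with a plaquette product $\prod_{p} B_p$, and $g = Y = XZ$ is handled by the same $W_R$, since $V_Y \Omega = V_X V_Z \Omega$ still carries an X-excitation at $v_1$. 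Outerness of $\alpha$ follows, completing the proof.
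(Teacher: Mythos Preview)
Your proof is correct and takes a genuinely different route from the paper's. The paper writes $\Phi$ via an explicit $4\times4$ matrix realisation of the crossed product, verifies normality by a direct vector-state computation, and proves injectivity by invoking the distal split property: decomposing $\mc{H}\cong\mc{H}_{\Lambda_1}\otimes\mc{H}_{\Lambda^c}\otimes\mc{H}_0$, it exhibits four orthogonal projections on the $\mc{H}_0$ factor that separate the summands $R_kV_k$, so that $\Phi(X)=0$ forces each $R_k=0$. In fact the paper remarks just before the lemma that outerness of $\alpha$ ``is not immediately clear'' and therefore deliberately avoids that route, recovering outerness only afterwards as a by-product in the proof of Theorem~\ref{thm:toricind}. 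You instead prove outerness head-on via the parity trick with $W_R$, after which factoriality of the crossed product and injectivity of $\Phi$ are formal. Your approach is shorter and makes the logical content transparent (outerness is the real point); what the paper's explicit construction buys is a concrete formula for the isomorphism and no appeal to the general centre computation for finite-group crossed products. Two small technical remarks: (i) ``universality'' of the von Neumann crossed product does not literally hand you a \emph{normal} $*$-homomorphism---for finite $G$ this is routine (extract each coefficient $A_g$ via the canonical normal conditional expectation and use that a finite sum of normal maps is normal), but it deserves a word; (ii) your choice of $R$ tacitly uses that $\Lambda_1$ and $\Lambda_2$ are geometrically separated, so that no bond in $\supp(A)\cap\Lambda_2$ has an endpoint among the vertices of $\Lambda_1$; this is guaranteed by $\Gamma\in\mc{C}^2$ in the toric code, and once noted your ``$R$ large enough'' prescription works as stated.
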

\begin{proof}
	Identify the Hilbert space $\mc{K} := \mc{H} \otimes \ell^2(G)$ with $\bigoplus_{i=1}^4 \mc{H}$. Since $\alpha$ is implemented by $g \mapsto V_g$, the elements of the crossed product $\mc{R}(\Gamma) \rtimes_\alpha G$ acting on $\mc{K}$ can be identified with matrices~\cite[Ch. 13.1]{MR1468230}
\begin{equation}
	\label{eq:crossedmat}
X = 
\begin{pmatrix}
 R_I & R_X V_X & R_Z V_Z & R_Y V_Y \\
 R_X V_X & R_I & R_Y V_Y & R_Z V_Z \\
 R_Z V_Z & R_Y V_Y & R_I & R_X V_X \\
 R_Y V_Y & R_Z V_Z & R_X V_X & R_I \\
\end{pmatrix},
\end{equation}
with $R_k \in \mc{R}(\Gamma)$. Define a map $\Phi : \mc{R}(\Gamma) \rtimes_\alpha G \to \widehat{\mc{R}}(\Gamma)$ by 
\[
	\Phi( X ) = R_I + R_X V_X + R_Y V_Y + R_Z V_Z,
\]
where $X$ is as above. It is not difficult to check that $\Phi$ is a $*$-homomorphism. Note also that $\Phi(\mc{R}(\Gamma)) = \mc{R}(\Gamma)$, where $\mc{R}(\Gamma)$ (in the argument of $\Phi$) is seen as a subalgebra of $\mc{R}(\Gamma) \rtimes_\alpha G$ in the usual way. The homomorphism $\Phi$ is also normal: let $\rho$ be a normal state on $\widehat{\mc{R}}(\Gamma)$. Then there is an orthogonal sequence of vectors  $\xi_n \in \mc{H}$ with $\sum_n \| \xi_n \|^2 = 1$ and $\rho(A) = \sum_n \langle \xi_n, A \xi_n \rangle$ for all $A \in \widehat{\mc{R}}(\Gamma)$. Define vectors $\Xi_n = \xi_n \oplus 0 \oplus 0 \oplus 0$ and $\Gamma_n = \xi_n \oplus \xi_n \oplus \xi_n \oplus \xi_n$ in $\mc{K}$. Then $X \mapsto \sum_n \langle \Gamma_n, X \Xi_n \rangle_{\mc{K}}$ defines a normal linear functional $\varphi$ on $\mc{R}(\Gamma) \rtimes_\alpha G$. But with $X$ again as in equation~\eqref{eq:crossedmat}, one calculates
\[
\varphi(X) = \sum_n \langle \xi_n, (R_I + R_X V_X + R_Y V_Y + R_Z V_Z) \xi_n \rangle_{\mc{H}} = \rho \circ \Phi(X),
\]
hence $\rho \circ \Phi$ is a normal state on $\mc{R}(\Gamma) \rtimes_\alpha G$ for every normal state $\rho$, so that $\Phi$ is normal as well. This implies that its image $\Phi(\mc{R}(\Gamma) \rtimes_\alpha G)$ is again a von Neumann algebra, and from the definition of $\Phi$ it is clear that this algebra is contained in $\widehat{\mc{R}}(\Gamma)$. But $\widehat{\mc{R}}(\Gamma)$ is the \emph{smallest} von Neumann algebra containing $\alg{A}(\Gamma) \cup \{V_X, V_Y, V_Z\}$, hence the result follows if we can show that $\Phi$ is injective.

To this end, recall that $\Gamma = \Lambda_1 \cup \Lambda_2$. Choose a cone $\Lambda$ such that $\Lambda_1 \ll \Lambda$ and $\Lambda_2 \subset \Lambda^c$, which can always be done by the definition of $\mathcal{C}^2$. Note that by the distal split property, $\mc{R}(\Gamma) \cong \mc{R}(\Lambda_1) \otimes \mc{R}(\Lambda_2)$. In fact, one can explicitly describe a unitary $U$ that decomposes $\mc{H}$ in a tensor product of three Hilbert spaces in a compatible way~\cite{haagdtoric}. In essence, there is a map $U : \mc{H} \to \mc{H}_{\Lambda_1} \otimes \mc{H}_{\Lambda^c} \otimes \mc{H}_0$, with $U \mc{R}(\Lambda_1) U^*$ acting on the first component of the tensor product, and $U \mc{R}(\Lambda_2) U^*$ on the second component. Here $\mc{H}_{\Lambda_1}$ is the closed linear span of $\mc{R}(\Lambda_1) \Omega$, and $\mc{H}_{\Lambda^c}$ is defined similarly. In the following we will write $\mc{K} = \mc{H}_{\Lambda_1} \otimes \mc{H}_{\Lambda^c}$.

As part of the construction of $\mc{H}_0$ (and $U$), one has to choose a path $\widehat{\xi}_X$ (and a dual path $\widehat{\xi}_Z$) from the boundary of $\Lambda_1$ to the boundary of $\Lambda^c$. Without loss of generality, we can assume that these paths coincide with $\xi_X$ or $\xi_Z$ on $\Lambda_1^c \cap \Lambda$. We get four orthonogal vectors in $\mc{H}_0$, namely $\Psi_I := \Omega$, $\Psi_X = F_{\widehat{\xi}_X} \Omega$, $\Psi_Y = F_{\widehat{\xi}_X} F_{\widehat{\xi}_Z} \Omega$ and $\Psi_Z = F_{\widehat{\xi}_Z} \Omega$. Now define projections $P_k = I \otimes |\Psi_k\rangle\langle \Psi_k| \in \alg{B}(\mc{K}) \otimes \alg{B}(\mc{H}_0)$. Note that the projections commute with $U \mc{R}(\Gamma) U^*$.

Now let $X \in \mc{R}(\Gamma) \rtimes_\alpha G$ be represented as in equation~\eqref{eq:crossedmat}. Consider a vector $\xi \in \mc{K}$. We claim that if $\Phi(X) U^* \xi \otimes \Omega = 0$ for all $\xi \in \mc{K}$, then it follows that $R_i = 0$ for $i \in \{I, X,Y,Z\}$. So suppose that $\Phi(X) U^* \xi \otimes \Omega = 0$. First note that
\[
P_k U \Phi(X) U^* = U R_I U^* P_k + U \sum_{l=X,Y,Z} R_l U^* P_k U V_l U^*. 
\]
We claim that $P_k U V_l U^* \xi \otimes \Omega = \delta_{k,l} U V_l U^* \xi \otimes \Omega$. We will show this in the case that $\xi = F \Omega$, where $F$ is the product of a finite number of path operators, each of whose paths lies in $\Lambda_1 \cup \Lambda^c$. Such vectors span a dense subset of $\mc{K}$. Moreover, from the definition of $U$ it follows that $U (\xi \otimes \Omega) = F\Omega$. Note that we have seen before, $V$ either commutes or anti-commutes with $F$. Hence we find
\[
U V_l U^* (F\Omega \otimes \Omega) = \pm U F V_l \Omega = U F F_{\xi_1} F_{\widehat{\xi}_l} F_{\xi_2} \Omega = \pm F F_{\xi_1} F_{\xi_2} \Omega \otimes F_{\widehat{\xi}_l} \Omega,
\]
where we set $\xi_1 = \xi_l \cap \Lambda_1$, $\xi_2 = \xi_l \cap \Lambda^c$, with the convention that $\xi_I = \emptyset$ and $F_\emptyset = I$. It follows that $P_k U V_l U^* \xi \otimes \Omega = \delta_{k,l} U V_l U^* \xi \otimes \Omega$ for all $\xi \in \mc{K}$.

Now suppose that $\Phi(X) = 0$ and hence $P_k \Phi(X) = 0$ for any $k$. By the calculation above it follows that $U R_k U^* (F F_{\xi_1} F_{\xi_2}) \Omega \otimes F_{\widehat{\xi}_k} \Omega = 0$. But $U R_k U^*$ only acts on the first tensor factor, so that it follows that $R_k F F_{\xi_1} F_{\xi_2}) \Omega = 0$ for any $F$ as above. Since $R_k$ is completely determined by its restriction to such vectors (compare~\cite[Lemma 3.5]{haagdtoric} or the proof of Lemma~\ref{lem:restriction}), it follows that $R_k = 0$. Since this holds for all $k$ this shows that $\Phi$ is injective as claimed, completing the proof.
\end{proof}
\begin{remark}
It is known that if $\alg{N} \subset \alg{M}$ is a subfactor with finite index, $\alg{M}$ is a finite-dimensional left $\alg{N}$ module~\cite{MR860811}, see also~\cite[Prop. 5.7]{MR1027496}. The proof gives an explicit realisation of this, by showing that every element of $\widehat{\mc{R}}(\Gamma)$ is of the form $\sum_k A_k V_k$ with $A_k \in \mc{R}(\Gamma)$.
\end{remark}

This explicit description of $\widehat{\mc{R}}(\Gamma)$ makes it possible to calculate the index $[\widehat{\mc{R}}(\Gamma) : \mc{R}(\Gamma)]$, and hence the cone index $\mu_{\pi_0}$. This results in the main theorem of this section: a complete classification of the excitations (as defined in Sect.~\ref{sec:charges}) in the toric code.
\begin{theorem}
\label{thm:toricind}
Consider the toric code on the infinite 2D lattice, and write $\pi_0$ for its (unique) ground state representation. Then the cone index $\mu_{\pi_0}$ is equal to four. Moreover, the model has four distinct irreducible charges, where the ground state representation (i.e., absence of an excitation) is considered to be a charge as well. 
\end{theorem}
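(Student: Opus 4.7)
The plan is to combine Lemma~\ref{lem:crossed} with a standard index computation for crossed products by a finite group, and then apply Theorem~\ref{thm:upperbound} to pin down the number of sectors. Fix some $\Gamma = \Lambda_1 \cup \Lambda_2 \in \mc{C}^2$. By Lemma~\ref{lem:crossed} we have an isomorphism $\Phi : \mc{R}(\Gamma) \rtimes_\alpha G \to \widehat{\mc{R}}(\Gamma)$ with $\Phi(\mc{R}(\Gamma)) = \mc{R}(\Gamma)$, where $G = \mathbb{Z}_2 \times \mathbb{Z}_2$ and $\alpha_g = \Ad V_g$. The crossed product carries the canonical (normal, faithful) conditional expectation $E_0 : \mc{R}(\Gamma) \rtimes_\alpha G \to \mc{R}(\Gamma)$ given by $E_0(\sum_{g \in G} A_g V_g) = A_e$, which transported through $\Phi$ yields a normal conditional expectation $\mc{E} : \widehat{\mc{R}}(\Gamma) \to \mc{R}(\Gamma)$.

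The first step is to show that $\mc{E}$ satisfies the Pimsner-Popa inequality~\eqref{eq:pimpop} with $\lambda = 1/|G| = 1/4$. For an element $T = \sum_{g} A_g V_g \in \widehat{\mc{R}}(\Gamma)$ one computes $T^* T$ and uses positivity of averaging over $G$: applying the automorphisms $\Ad V_h$ for $h \in G$ and summing, one finds the standard identity $|G| \,\mc{E}(T^* T) \geq T^* T$ (this is the classical Pimsner-Popa computation for the canonical expectation on a crossed product by a finite group, and needs only that $\Phi$ is a $*$-isomorphism). This gives $[\widehat{\mc{R}}(\Gamma) : \mc{R}(\Gamma)] \leq 4$, and hence $\mu_{\pi_0} \leq 4$ by Definition~\ref{def:globalindex}.

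For the lower bound we use the four cone-localised transportable endomorphisms $\iota, \rho_X, \rho_Y, \rho_Z$ constructed in~\cite{toricendo}. One checks that these are pairwise inequivalent irreducible representatives of four distinct superselection sectors, each of statistical dimension one (the endomorphisms are implemented by the unitary charge transporters $V_X, V_Y, V_Z$, so each $\rho_i$ has a conjugate given by itself, with $R = \overline{R} = I$, forcing $d(\rho_i) = 1$). Theorem~\ref{thm:upperbound} then gives $\mu_{\pi_0} \geq \sum_i d(\rho_i)^2 = 4$. Combining both inequalities yields $\mu_{\pi_0} = 4$, and the first part of Theorem~\ref{thm:upperbound} bounds the total number of equivalence classes of sectors by $\lfloor \mu_{\pi_0} \rfloor = 4$. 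Since we have already exhibited four such classes, the classification is complete.

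The main obstacle is the Pimsner-Popa estimate for $\mc{E}$: one must verify that the explicit form of $\Phi$ from Lemma~\ref{lem:crossed} really makes $\mc{E}$ the canonical expectation of a crossed product (so that the $1/|G|$ bound applies), as opposed to some other conditional expectation with a worse constant. Because the four elements $I, V_X, V_Y, V_Z$ form a linearly independent set of unitaries normalising $\mc{R}(\Gamma)$ and implementing $\alpha$, the faithful normal map $\mc{E}$ defined above does coincide with the group-averaging expectation, so the Pimsner-Popa constant is indeed $1/4$. No other delicate step is required, since irreducibility of the inclusion (Lemma~\ref{lem:factors}) ensures uniqueness of the normal conditional expectation, and the lower bound comes for free from the four sectors already in hand.
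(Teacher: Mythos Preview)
Your proof is correct and follows the same route as the paper: both use Lemma~\ref{lem:crossed} to identify the inclusion with $\mc{R}(\Gamma)\subset\mc{R}(\Gamma)\rtimes_\alpha(\mathbb{Z}_2\times\mathbb{Z}_2)$, read off the index as $|G|=4$ (the paper simply cites the standard crossed-product index result, whereas you unpack it into the Pimsner--Popa bound), and then combine Theorem~\ref{thm:upperbound} with the four sectors of~\cite{toricendo}. The only wrinkle is the order of your last two steps: the second clause of Theorem~\ref{thm:upperbound} (giving $\sum_i d(\rho_i)^2\le\mu_{\pi_0}$) is stated under the hypothesis that \emph{every} sector has a conjugate, which you only know once the first clause, together with $\mu_{\pi_0}\le 4$, has already pinned the sectors down to the four self-conjugate ones---so swap those two invocations and the argument is clean.
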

\begin{proof}
	By the previous Lemma we have that $\widehat{\mc{R}}(\Gamma) = \mc{R}(\Gamma) \rtimes_\alpha G$. It is well known that the index $[\mc{R}(\Gamma) \rtimes_\alpha G : \mc{R}(\Gamma)]$ is equal to $|G|$, see for example~\cite[Example 3.11]{MR1662525}. In that example it is assumed that $\alpha$ is outer, but this is only used to conclude that $\mc{R}(\Gamma) \rtimes G$ is a factor. This is true in our case since $\widehat{\mc{R}}(\Gamma)$ is a factor. In fact, from the Lemma above and Lemma~\ref{lem:factors}, it also follows that $\alpha$ is outer (by~\cite[Prop. 1.4.4(i)]{MR1473221}).

Note that it follows from the proofs in this section that in fact the choice of cones $\Lambda_1$ and $\Lambda_2$ is not important, which shows that $\mu_{\pi_0} = 4$. In~\cite{toricendo} we identified four distinct (equivalence classes of) charges of the toric code model. By Theorem~\ref{thm:upperbound} the second claim then follows immediately.
\end{proof}
Note that each charge in the toric code is self conjugate, since $\rho^2(A) = A$ for the corresponding localised automorphisms. In particular one finds $d(\rho_i)^2 = 1$ and $\sum_i d(\rho_i)^2 = \mu_{\pi_0}$ as expected from Theorem~\ref{thm:upperbound}.

\section{Non-degeneracy of the sectors}
In the setting above, a braiding can be naturally defined on the class of localised and transportable endomorphisms (c.f.~\cite{MR1405610,MR1838752,toricendo}). Concretely, if $\rho$ and $\sigma$ are two localised and transportable endomorphisms, this construction assigns a unitary operator $\varepsilon_{\rho,\sigma}$ intertwining the endomorphisms $\rho \otimes \sigma$ and $\sigma \otimes \rho$, where $\rho \otimes \sigma := \rho\circ \sigma$.\footnote{More precisely, the endomorphism $\rho$ actually has to be extended to an auxiliary algebra $\alg{A}^{\Lambda_a}$ containing the cone algebras $\mc{R}(\Lambda)$. This is just for technical reasons; the details are not important here and can be found in, e.g., \cite{toricendo}.} The tensor product can be interpreted as first adding a charge of type $\sigma$, and then subsequently a charge of type $\rho$. The assignment $(\rho, \sigma) \mapsto \varepsilon_{\rho,\sigma}$ satisfies all the properties required of a braiding. This has been well-studied in the context of algebraic quantum field theory (see for example~\cite{MR1016869}, where the occurrence of braid, as opposed to symmetric, statistics is discussed). We will be interested in the question how far the braiding is removed from being a symmetry (as is the case in a theory with only Bose/Fermi exchange statistics).

\begin{definition}[\cite{MR1147467}]
	\label{def:degenerate}
A localised and transportable endomorphism $\rho$ is said to be \emph{degenerate} if for \emph{any} irreducible localised and transportable endomorphism $\sigma$ the monodromy is trivial, in the sense that $\varepsilon_{\rho,\sigma} \circ \varepsilon_{\sigma,\rho} = I$.
\end{definition}
This definition only depends on the equivalence class of $\rho$, hence it makes sense to say that a sector is degenerate. In theories where all charges obey symmetric (Bose/Fermi) statistics, the monodromy is always trivial. Hence the set of degenerate sectors tells us something about how far the theory is from being trivial in the sense that there are no anyonic (or braided) statistics. Note that the set of degenerate endomorphisms is always non-empty: the identity morphism (corresponding to the absence of a charge) is always degenerate. The same is true for direct sums of degenerate morphisms. Here a direct sum of two localised (in some cone $\Lambda$) morphisms $\rho$ and $\sigma$ is defined by
\[
	(\rho \oplus \sigma)(A) := V_1 \rho(A) V_1^* + V_2 \sigma(A) V_2^*,
\]
where $V_i \in \mc{R}(\Lambda)$ and $V_i^*V_j = \delta_{i,j} I$ and $V_1 V_1^* + V_2 V_2^* = I$. A localised endomorphism is said to be \emph{trivial} if it is a direct sum of copies of the identity morphism.

Another quantity that contains information about the sectors and the braiding of charges is the modular matrix $S$ (in the sense of Verlinde~\cite{MR954762}). Since we have assumed the existence of conjugate charges and we can define a braiding, the entries of the $S$-matrix can be defined in the present setting as well (see for example~\cite{MR1721563,MR1147467}). A natural question is if the matrix $S$ is invertible. If $S$ is not invertible, the theory cannot be described by a topological quantum field theory (c.f.~\cite[Sect. 4.6]{Wang}). Here we show that a criterion given by M\"uger~\cite{MR1721563} can be generalised to the setting of 2D lattice systems. In the language of category theory, this implies that the category of cone-like localisable and transportable endomorphisms is a \emph{modular tensor category}. Such categories, among other things, give an abstract mathematical description of the properties of the excitations that are used for topological quantum computation~\cite{Wang}.

Physically the absence of non-degenerate excitations can be interpreted as follows. For every non-trivial elementary excitation $\rho$, there is some other excitation $\sigma$ such that interchanging $\rho$ with $\sigma$ twice is a non-trivial operation. Alternatively we can think of $\rho$ being fixed, and $\sigma$ making a full loop around $\sigma$. This implies that the presence or absence of a charge $\rho$ can be detected by pulling a pair of $\sigma$ and its conjugate charge from the vacuum, circle $\sigma$ around some region, and fuse it again with its antiparticle. If a charge $\rho$ is present in the region, this fusion will have a non-trivial result. In this way the presence of any charge can be detected. This can be used in topological quantum computation to get a readout of the result.

It is known by a result of Rehren~\cite{MR1147467} that the two seemingly different concepts of non-degeneracy, namely non-triviality of the braiding and invertibility of $S$, are tightly related. It turns out that (for theories with finitely many different charges) $S$ is invertible if and only if all degenerate morphisms are trivial. It is precisely this fact that we will exploit to give a sufficient condition that implies that $S$ is invertible. The following lemma is a crucial part. Its proof is an adaptation of the proof of~\cite[Prop. 4.2]{MR1721563} for algebraic quantum field theories on the line, to the present situation of 2D quantum spin systems. For the convenience of the reader we present the details of the proof.
\begin{lemma}\label{lem:inner}
	Let $\rho$ be an irreducible endomorphism localised in a cone $\Lambda_1$. Suppose that $\Lambda_2$ is a cone such that $\Lambda_1 \ll \Lambda_2$ be two cones and the opening angle of $\Lambda_2$ is bigger than that of $\Lambda_1$. If for every such cone $\Lambda_2$, $\rho$ acts trivially on $\mc{R}(\Lambda_1)' \cap \mc{R}(\Lambda_2)$, then $\rho$ is inner, in the sense that it is a direct sum of copies of the trivial representation.
\end{lemma}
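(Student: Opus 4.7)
The strategy, following \cite[Prop.~4.2]{MR1721563}, is to produce a nonzero intertwiner $V \in \mc{R}(\Lambda_1)$ between the identity representation and $\rho \upharpoonright \mc{R}(\Lambda_1)$. Irreducibility of $\pi_0$ together with Haag duality makes $\mc{R}(\Lambda_1)$ a factor, and by assumption $\rho \upharpoonright \mc{R}(\Lambda_1)$ is irreducible; any nonzero such $V$ therefore forces $V^*V$ and $VV^*$ to be positive scalars, so $V$ is proportional to a unitary $W \in \mc{R}(\Lambda_1)$ with $\rho(A) = W^* A W$ on $\mc{R}(\Lambda_1)$. Locality between $\alg{A}(\Lambda_1)$ and $\alg{A}(\Lambda_1^c)$ promotes this to an inner implementation of $\rho$ on the whole observable algebra, giving $\rho \cong \id$; irreducibility then identifies $\rho$ with a single copy of the trivial representation.

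To build $V$, I exploit the angle gap between $\Lambda_1$ and $\Lambda_2$ to choose a cone $\widetilde{\Lambda} \subset \Lambda_2 \cap \Lambda_1^c$, disjoint from $\Lambda_1$, with $\Gamma := \Lambda_1 \cup \widetilde{\Lambda} \in \mc{C}^2$, so that the distal split property identifies $\mc{R}(\Gamma) \simeq \mc{R}(\Lambda_1) \otimes \mc{R}(\widetilde{\Lambda})$. Transportability supplies $\rho' \cong \rho$ localised in $\widetilde{\Lambda}$ and a unitary $U$ with $U \rho(A) = \rho'(A) U$; since $\rho$ and $\rho'$ both act trivially on $\alg{A}(\Gamma^c)$, $U$ commutes with $\pi_0(\alg{A}(\Gamma^c))$ and so lies in $\widehat{\mc{R}}(\Gamma)$. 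The hypothesis enters via $\mc{R}(\widetilde{\Lambda}) \subset \mc{R}(\Lambda_1)' \cap \mc{R}(\Lambda_2)$: it gives $\rho(A) = A$ on $\mc{R}(\widetilde{\Lambda})$ and hence $\rho'(A) = U A U^*$ there; symmetrically, localisation of $\rho'$ in $\widetilde{\Lambda}$ (disjoint from $\Lambda_1$) yields $A U = U \rho(A)$ on $\mc{R}(\Lambda_1)$. Applying the normal conditional expectation $\mc{E} : \widehat{\mc{R}}(\Gamma) \to \mc{R}(\Gamma)$, which exists because the inclusion of Lemma~\ref{lem:factors} is irreducible of finite index once the cone index is assumed finite, to the relation $U \rho(A) = A U$ with $A \in \mc{R}(\Lambda_1) \subset \mc{R}(\Gamma)$ and using its bimodule property produces $Y := \mc{E}(U) \in \mc{R}(\Gamma)$ with $A Y = Y \rho(A)$. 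A slice map $\id \otimes \phi$ with $\phi$ a normal state on $\mc{R}(\widetilde{\Lambda})$ then delivers $V := (\id \otimes \phi)(Y) \in \mc{R}(\Lambda_1)$ satisfying $A V = V \rho(A)$.

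The essential obstacle is establishing $V \neq 0$ for some $\phi$, equivalently $\mc{E}(U) \neq 0$. Generically the intertwiner associated with a nontrivial sector lies in $\widehat{\mc{R}}(\Gamma) \setminus \mc{R}(\Gamma)$ and is annihilated by $\mc{E}$, as witnessed by $\mc{E}(V_g) = 0$ for $g \neq e$ in the toric-code crossed-product decomposition of Lemma~\ref{lem:crossed}; vanishing of $\mc{E}(U)$ is thus precisely the signature of a nontrivial charge. The full strength of the hypothesis, applied across \emph{every} admissible $\Lambda_2$, is meant to rule this out: the elements of $\mc{R}(\Lambda_1)' \cap \mc{R}(\Lambda_2)$ that go beyond the locally generated $\mc{R}(\Lambda_2 \cap \Lambda_1^c)$ are exactly the charge-detecting operators capable of distinguishing $[\rho]$ from the trivial sector, and the hypothesis asserts that $\rho$ is invisible to all of them. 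Turning this structural statement into the quantitative claim $\mc{E}(U) \neq 0$ is the technical heart of the adaptation of Müger's argument to the 2D cone setting, and is the step where the quantifier over all admissible $\Lambda_2$ is indispensable.
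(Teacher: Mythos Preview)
Your argument has a genuine gap: you never establish $\mc{E}(U) \neq 0$. You correctly flag this as the crux, but then offer only the heuristic that the hypothesis ``across every admissible $\Lambda_2$'' ought to rule out the vanishing. That is not a proof, and as you yourself point out, in the model example $\mc{E}$ annihilates exactly the charge transporters coming from nontrivial sectors, so the vanishing of $\mc{E}(U)$ is the expected behaviour for a genuinely nontrivial $\rho$. Nothing in your write-up explains how the hypothesis converts into $\mc{E}(U)\neq 0$. A second, independent problem is that your route imports an assumption the lemma does not make: the very existence of the conditional expectation $\mc{E}:\widehat{\mc{R}}(\Gamma)\to\mc{R}(\Gamma)$ requires the cone index to be finite, whereas the lemma is stated and applied with no finiteness hypothesis.

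The paper's proof sidesteps both issues by using the Type~I structure supplied by the distal split property directly. One picks $\widetilde{\Lambda}$ with $\Lambda_2 \ll \widetilde{\Lambda}$, obtaining Type~I factors $\alg{N}_1,\alg{N}_2$ with
\[
\mc{R}(\Lambda_1)\subset\alg{N}_1\subset\mc{R}(\Lambda_2)\subset\alg{N}_2\subset\mc{R}(\widetilde{\Lambda}).
\]
The hypothesis (applied with $\widetilde{\Lambda}$ in the role of $\Lambda_2$) makes $\rho$ trivial on $\alg{N}_1'\cap\mc{R}(\widetilde{\Lambda})$, and a short commutant computation inside $\alg{N}_2$ then yields $\rho(\alg{N}_1)\subset\alg{N}_1$. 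The decisive structural fact is that every endomorphism of a Type~I factor is of the form $A\mapsto\sum_i V_i A V_i^*$ for isometries $V_i\in\alg{N}_1$ with orthogonal ranges summing to $I$; this replaces your conditional-expectation step and makes the nonvanishing question evaporate, since the $V_i$ are produced automatically rather than extracted from $\mc{E}(U)$. One more application of the hypothesis shows each $V_i$ commutes with $\mc{R}(\Lambda_1)'\cap\mc{R}(\Lambda_2)$, which together with $V_i\in\alg{N}_1$ forces $V_i\in\mc{R}(\Lambda_1)$. The resulting inner formula then agrees with $\rho$ on $\alg{A}(\Lambda_1)$ by construction and on $\alg{A}(\Lambda_1^c)$ by localisation, hence everywhere. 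No finite-index assumption, no slice maps, and no nonvanishing argument are required.
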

\begin{proof}
Choose a cone $\widetilde{\Lambda}$ such that $\Lambda_2 \ll \widetilde{\Lambda}$. Then by the distal split property there are Type I factors $\alg{N}_1$ and $\alg{N}_2$ such that
\[
\mc{R}(\Lambda_1) \subset \alg{N}_1 \subset \mc{R}(\Lambda_2) \subset \alg{N}_2 \subset \mc{R}(\widetilde{\Lambda}).
\]
Let $A \in \alg{N}_1$ and $B \in \alg{N}_1' \cap \mc{R}(\widetilde{\Lambda})$. Then by the assumptions of the Lemma, it follows that $\rho(B) = B$ and by localisation and Haag duality that $\rho(\alg{N}_1) \subset \mc{R}(\Lambda_2)$, hence
\[
\rho(A) \in (\alg{N}_1' \cap \mc{R}(\widetilde{\Lambda}))' \cap \alg{N}_2 \subset (\alg{N}_1' \cap \alg{N}_2)' \cap \alg{N}_2 = \alg{N}_1,
\]
since the $\alg{N}_i$ are Type I factors. Hence $\rho$ restricts to an endomorphism of $\alg{N}_1$ and therefore there is an (at most countable) collection of isometries $V_i \in \alg{N}_1$ (by~\cite[Cor. 3.8]{MR872351}) with pairwise orthogonal ranges, such that $\sum_i V_i V_i^* = I$ and
\[
	\rho(A) = \eta(A) := \sum_i V_i A V_i^*
\]
for all $A \in \alg{N}_1$. By assumption $\eta(A) = A$ for each $A \in \mc{R}(\Lambda_1)' \cap \mc{R}(\Lambda_2)$. By multiplying on the right by $V_i$ for some $i$ it follows that
\[
V_i \in (\mc{R}(\Lambda_1)' \cap \mc{R}(\Lambda_2))' \cap \alg{N}_1 = \mc{R}(\Lambda_1).
\]
It is clear that $\eta(A)$ can be defined also for all $A \in \mc{R}(\Lambda_1)' = \mc{R}(\Lambda^c)$, and it then follows that $\eta(A) = A = \rho(A)$ for all $A \in \alg{A}(\Lambda^c_1)$. Since they also agree on $\alg{A}(\Lambda_1)$ by construction of $\eta$, it follows that $\eta = \rho$ on all of $\alg{A}$.
\end{proof}
Note that in the case of spin systems it is easier to conclude that $\rho = \eta$ on all observables, essentially because we have to deal with \emph{complements} of regions, instead of \emph{spacelike} complements. In~\cite{MR667768} the time slice axiom was used to make a similar conclusion. In the present situation we can view the plane as the $t = 0$ ``time slice'', and this already completely determines the system.

\begin{figure}
	\begin{center}
	\includegraphics[width=7cm]{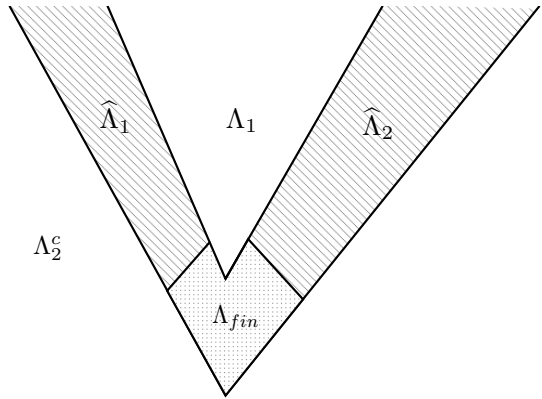}
\end{center}
\caption{Two cones $\Lambda_1 \ll \Lambda_2$. The region $\Lambda_1^c \cap \Lambda_2$ is subdivided into two truncated cones $\widehat{\Lambda}_i$ (dashed region) and a remaining part $\Lambda_{fin}$ (dotted region) consisting of only finitely many sites. The finite region is chosen to be as big as necessary to apply the split property to the truncated cones in the shaded regions, i.e., $\widehat{\Lambda}_1 \cup \widehat{\Lambda}_2 \in \mc{C}^2_{tr}$.}
	\label{fig:truncated}
\end{figure}

In the proof the condition that the opening angle of $\Lambda_2$ is bigger than that of $\Lambda_1$ is not used. It will be important in the following, however. As remarked before, the properties of Haag duality and the distal split property do not crucially depend on the cone shape. We will henceforth assume that they also hold for \emph{truncated cones} as in Figure~\ref{fig:truncated}. An inspection of the proofs in~\cite{haagdtoric} shows that this is the case for the toric code. The set of pairs of truncated cones $\Lambda_1 \cup \Lambda_2$ such that the split property holds for $\mc{R}(\Lambda_1) \vee \mc{R}(\Lambda_2)$ will be denoted by $\mc{C}^2_{tr}$.
\begin{theorem}
	Suppose that for all cones $\Lambda_1$ and $\Lambda_2$ such that $\Lambda_1 \ll \Lambda_2$ and $\Lambda_1^c \cap \Lambda_2$ can be written as the union of some $\Gamma = \widehat{\Lambda}_1 \cup \widehat{\Lambda}_2 \in \mathcal{C}^2_{tr}$ and a finite region $\Lambda_{fin}$ (see Figure~\ref{fig:truncated}), the algebra $\mc{R}(\Gamma^c)'$ is generated by $\mc{R}(\widehat{\Lambda}_1)$, $\mc{R}(\widehat{\Lambda}_2)$ and the set $\mc{T}$ of charge transporters from $\widehat{\Lambda}_1$ to $\widehat{\Lambda}_2$ (and vice versa), then each non-trivial sector is non-degenerate. If there are only finitely many (equivalence classes of) charges, the modular matrix $S$ is invertible.
\end{theorem}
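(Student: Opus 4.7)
\emph{Proof plan.} The second claim reduces to the first via the theorem of Rehren~\cite{MR1147467} cited just after Definition~\ref{def:degenerate}: with finitely many sectors, $S$ is invertible if and only if every degenerate morphism is trivial, i.e.\ a direct sum of copies of the identity. Since a non-trivial direct sum of degenerate sectors is still degenerate and still non-trivial in this sense, the whole statement reduces to proving that any irreducible degenerate $\rho$ is equivalent to the identity.

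So let $\rho$ be such, localised in a cone $\Lambda_1$. The natural strategy is to check the hypothesis of Lemma~\ref{lem:inner}: show that for every cone $\Lambda_2$ with $\Lambda_1 \ll \Lambda_2$ and strictly larger opening angle, $\rho$ acts as the identity on $\mc{R}(\Lambda_1)' \cap \mc{R}(\Lambda_2)$. Fix such a $\Lambda_2$ and use the geometric hypothesis of the theorem to write $\Lambda_1^c \cap \Lambda_2 = \widehat{\Lambda}_1 \cup \widehat{\Lambda}_2 \cup \Lambda_{fin}$ with $\Gamma := \widehat{\Lambda}_1 \cup \widehat{\Lambda}_2 \in \mathcal{C}^2_{tr}$. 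By Haag duality, $\mc{R}(\Lambda_1)' \cap \mc{R}(\Lambda_2) = \mc{R}(\Lambda_1 \cup \Lambda_2^c)' = \mc{R}((\Gamma \cup \Lambda_{fin})^c)'$, which is generated (by normal closure) by $\widehat{\mc{R}}(\Gamma) = \mc{R}(\Gamma^c)'$ together with the finite-dimensional $\mc{R}(\Lambda_{fin})$. Since $\widehat{\Lambda}_i, \Lambda_{fin} \subset \Lambda_1^c$, localisation of $\rho$ together with its weak continuity already give $\rho(A) = A$ for $A$ in each of $\mc{R}(\widehat{\Lambda}_i)$ and $\mc{R}(\Lambda_{fin})$. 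By the structural hypothesis of the theorem, the only remaining generators of $\widehat{\mc{R}}(\Gamma)$ are the charge transporters in $\mathcal{T}$, so everything comes down to proving $\rho(V) = V$ for every $V \in \mathcal{T}$.

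This is where I would follow Müger's computation~\cite[Prop. 4.2]{MR1721563}, adapted to our cone geometry. Take $V \in \mathcal{T}$ to be a unitary intertwiner in $(\sigma_1, \sigma_2)$, where $\sigma_1, \sigma_2$ are equivalent irreducible localised transportable endomorphisms localised respectively in $\widehat{\Lambda}_1$ and $\widehat{\Lambda}_2$. Naturality of the braiding in its second argument applied to $V$ yields
\[
V\,\varepsilon_{\rho,\sigma_1} \;=\; \varepsilon_{\rho,\sigma_2}\,\rho(V),
\]
and combining this with the analogous naturality identity for $\varepsilon_{-,\rho}$, the unitary $V^*\rho(V)$ is expressed as a product of braidings. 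The key geometric input is that $\widehat{\Lambda}_1$ and $\widehat{\Lambda}_2$ sit on opposite sides of $\Lambda_1$ within $\Lambda_2$, so that $\varepsilon_{\rho,\sigma_1}$ and $\varepsilon_{\rho,\sigma_2}$ realise the two opposite braidings of $\rho$ with $\sigma$. With this in hand, $V^*\rho(V)$ is identified with (a representative of) the monodromy $\varepsilon_{\rho,\sigma}\circ\varepsilon_{\sigma,\rho}$, which is $I$ by degeneracy of $\rho$. Hence $\rho(V)=V$, and Lemma~\ref{lem:inner} closes the argument.

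I expect the main obstacle to be the last identification, i.e.\ rigorously showing that $\varepsilon_{\rho,\sigma_1}$ and $\varepsilon_{\rho,\sigma_2}$ really encode the ``going one way'' and ``going the other way'' around $\Lambda_1$, so that their combination is the monodromy rather than some unrelated intertwiner. This requires a careful choice of the auxiliary cones entering the DHR construction of the braiding, and it is precisely here that the hypothesis that $\Lambda_2$ have a strictly larger opening angle than $\Lambda_1$ is used: it provides enough room in $\Lambda_1^c \cap \Lambda_2$ to place such auxiliary cones on \emph{both} sides of $\Lambda_1$ while remaining inside an admissible configuration in $\mathcal{C}^2_{tr}$, which in turn is what makes the split property (needed for the braiding construction) available on the relevant algebras.
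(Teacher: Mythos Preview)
Your proposal is correct and follows essentially the same route as the paper: reduce to Lemma~\ref{lem:inner} by showing that a degenerate irreducible $\rho$ acts trivially on $\mc{R}(\Lambda_1)'\cap\mc{R}(\Lambda_2)$, using localisation on $\mc{R}(\widehat{\Lambda}_i)$ and $\alg{A}(\Lambda_{fin})$, the structural hypothesis to reduce to charge transporters, and the degeneracy--monodromy identification (which the paper simply cites as \cite[Lemma~3.2]{MR1721563}) for $\rho(V)=V$. The only cosmetic difference is that the paper first shows $\rho$ is trivial on $\mc{R}(\Gamma^c)'\vee\alg{A}(\Lambda_{fin})$ and then identifies this with $\mc{R}(\Lambda_1)'\cap\mc{R}(\Lambda_2)$ via the finite Type~I tensor splitting and Haag duality, whereas you do that identification up front; your ``by normal closure'' step is exactly this computation and deserves the same care.
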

\begin{proof}
	Let $\rho$ be a degenerate irreducible endomorphism localised in some cone $\Lambda_1$. Choose $\Lambda_2$ such that $\Lambda_1 \ll \Lambda_2$ be as in the premises. Then by~\cite[Lemma 3.2]{MR1721563}, the proof of which carries over almost verbatim to the present situation, $\rho$ is degenerate if and only if $\rho(V) = V$ for each charge transporter $V$ from $\widehat{\Lambda}_1$ to $\widehat{\Lambda}_2$ (and vice versa). By localisation, and since $\rho$ is normal when restricted to $\mc{R}(\Lambda)$ for any cone $\Lambda$, it follows that $\rho$ acts trivially on $\rho(\widehat{\Lambda}_1)$ and $\rho(\widehat{\Lambda}_2)$. By localisation it also acts trivially on $\alg{A}(\Lambda_{fin})$, and by the degeneracy assumption on the charge transporters $\mc{T}$ as well. It follows that $\rho$ acts trivially on the algebra 
\[
\mc{R}(\widehat{\Lambda}_1) \vee \mc{R}(\widehat{\Lambda}_2) \vee \alg{A}(\Lambda_{fin}) \vee \mc{T}.
\]
By assumption this is equal to $\mc{R}(\Gamma^c)' \vee \alg{A}(\Lambda_{fin})$.

To proceed, note that $\mc{R}(\Gamma^c) \simeq \alg{A}(\Lambda_{fin}) \otimes \mc{R}(\Gamma^c \setminus \Lambda_{fin})$, since $\alg{A}(\Lambda_{fin})$ is a finite Type I factor. By the same reason the Hilbert space $\mc{H}$ on which the von Neumann algebras act can be decomposed as $\mc{H}_1 \otimes \mc{H}_2$, where $\alg{A}(\Lambda_{fin}) \simeq I \otimes \alg{B}(\mc{H}_2)$. It follows that
\[
\mc{R}(\Gamma^c) \cap \alg{A}(\Lambda_{fin})' \simeq \mc{R}(\Gamma^c \setminus \Lambda_{fin}) \otimes \alg{A}(\Lambda_{fin}) \cap \alg{B}(\mc{H}_1) \otimes I = \mc{R}(\Gamma^c \setminus \Lambda_{fin}),
\]
and therefore $\mc{R}(\Gamma^c)' \vee \alg{A}(\Lambda_{fin}) = \mc{R}(\Gamma^c \setminus \Lambda_{fin})'$. Note that the region $\Gamma^c \setminus \Lambda_{fin} = \Lambda_1 \cup \Lambda_2^c$. By the distal split property it follows that $\mc{R}(\Gamma^c \setminus \Lambda_{fin}) \simeq \mc{R}(\Lambda_1) \vee \mc{R}(\Lambda_2^c)$. Hence by Haag duality
\[
\mc{R}(\Gamma^c \setminus \Lambda_{fin})' = (\mc{R}(\Lambda_1) \vee \mc{R}(\Lambda_2)')' = \mc{R}(\Lambda_1)' \cap \mc{R}(\Lambda_2).
\]
By Lemma~\ref{lem:inner} it then follows that each degenerate irreducible endomorphism is trivial. The last claim then follows by a result of Rehren~\cite{MR1147467}, who proved that $S$ is invertible if and only if the only degenerate endomorphisms are trivial.
\end{proof}
This gives an alternative proof that in the case of the toric code the corresponding category of localised endomorphisms is modular, since the assumptions of the theorem above are satisfied because of Lemma~\ref{lem:vnagen}. Of course that already follows since by Theorem~\ref{thm:toricind} of the present work and Theorem 6.2 of Ref.~\cite{toricendo}, together with the well-known fact that the category of representations of the quantum double of a finite group is a modular tensor category (see e.g.~\cite{MR1797619}, or \cite[Ch. 5]{phdnaaijkens} for a proof by showing that the centre of the category, that is the full subcategory of degenerate objects in the sense of Definition~\ref{def:degenerate}, is trivial). However, this criterion may be of use in models for which the full category of endomorphisms is not known. 

\vspace{\baselineskip}
\textbf{Acknowledgements:} The author is supported by the Dutch Organisation for Scientific Research (NWO) through a Rubicon grant and partly through the EU project QFTCMPS and the cluster of excellence EXC 201 Quantum Engineering and Space-Time Research. The index approach was suggested to the author independently by K.-H. Rehren and M. M\"uger. The author wishes to thank Tobias Osborne for many helpful comments and conversations.

\bibliographystyle{abbrv}
\bibliography{../../bib/refs}

\end{document}